\newcommand{\cqd}{\hspace{0.2in} \vrule  width  6pt  height  6pt  depth 0pt  \vspace{0.1in}}
\newcommand{\ignore}[1]{}
\newtheorem{assumption}{Assumption}
\newcommand {\SC} {\mathcal{S}}
\begin{document}
\title{Initial Conflicts for Transformation Rules \\ with Nested Application Conditions}
%
\titlerunning{Initial Conflicts for Transformation Rules with ACs}
%

\author{Leen Lambers\inst{1}\,\textsuperscript{\faEnvelopeO}\orcidID{0000-0001-6937-5167} \and
Fernando Orejas\inst{2}\orcidID{0000-0002-3023-4006}\thanks{F. Orejas has been supported by the Salvador de Madariaga grant PRX18/00308 and by funds from the Spanish Research Agency (AEI) and the European Union (FEDER funds) under grant GRAMM (ref. TIN2017-86727-C2-1-R)}}
\authorrunning{L. Lambers and F. Orejas}
%
\institute{Hasso Plattner Institute, University of Potsdam, Potsdam, Germany
\email{leen.lambers@hpi.de} \and
Universitat Polit\`ecnica de Catalunya, Barcelona, Spain
\email{orejas@lsi.upc.edu}}
\maketitle              
\begin{abstract}
Recently, initial conflicts were introduced in the framework of $\mathcal{M}$-adhesive categories as an improvement to critical pairs, a specific complete subset of all possible conflicts for a pair of transformation rules. Critical pairs are specific in the sense that they show conflicts in a minimal context. They are moreover complete in the sense that for each conflict there exists a critical pair that can be embedded into it (or that represents the same conflict in some minimal context). Initial conflicts represent a proper, but still complete subset of critical pairs. They moreover have the important characteristic that for each conflict a unique initial conflict exists representing it. The theory of critical pairs has been extended in the framework of $\mathcal{M}$-adhesive categories to rules with nested application conditions (ACs), restricting the applicability of a rule and generalizing the well-known negative application conditions. A notion of initial conflicts for rules with ACs does not exist yet. 

We extend the theory of initial conflicts in the framework of $\mathcal{M}$-adhesive categories to transformation rules with ACs. We first show that for rules with ACs, conflicts are in general neither inherited from a bigger context any more, nor is it possible to find a finite and complete subset of finite conflicts as illustrated for the category of graphs. We define initial conflicts to be special so-called symbolic transformation pairs, and show that they are minimally complete (and in the case of graphs also finite) in this symbolic way. We show that initial conflicts represent a proper subset of critical pairs again. We moreover demonstrate that (analogous to the case of rules without ACs) for each conflict a unique initial conflict exists representing it. We conclude with presenting a sufficient condition illustrating important special cases for rules with ACs, where we do not only have initial conflicts being complete in a symbolic way, but also find complete (and in the case of graphs also finite) subsets of conflicts in the classical sense.  
\keywords{Graph Transformation  \and Critical Pairs \and Conflicts}
\end{abstract}

\section{Introduction}
\emph{Detecting and analyzing conflicts} is an important issue in software analysis and design, which has been addressed successfully using powerful techniques from graph transformation (see, e.g., \cite{HausmannHT02,KochMP05,Lambers2009,Lambers0TBH18}), most of them based on critical pair analysis. The \emph{power of critical pairs} is a consequence of the fact that: a) they are complete, in the sense that they represent all conflicts; b) there is a finite number of them; and c) they can be computed statically. The main problem is that their computation has exponential complexity in the size of the preconditions of the rules. For this reason, some significantly smaller subsets of critical pairs that are still complete have been defined \cite{LambersEO08,LambersBO0T18,AzziCR19}, clearing the way for a more efficient computation. In particular, recently, in \cite{LambersBO0T18}, a new approach for conflict detection was introduced based on a different intuition. Instead of considering conflicts in a minimal context, as for critical pairs, we used the notion of initiality to characterize a complete set of minimal conflicts, showing that \emph{initial conflicts} form a strict subset of critical pairs. In particular, we have that every conflict is represented by a  unique initial conflict, as opposed to the fact that each conflict may be represented by many critical pairs.

Most of the work on critical pairs only applies to \emph{plain} graph transformation systems, i.e. transformation systems with unconditional rules. Nevertheless, in practice, we often need to limit the application of rules, defining some kind of \emph{application conditions} (ACs). In this sense, in \cite{LambersEO06,Lambers2009} we defined critical pairs for rules with  negative application conditions (NACs), and in \cite{EhrigHLOG10,EhrigGHLO12} for the general case of  ACs, where conditions are as expressive as arbitrary first-order formulas on graphs. However, to our knowledge, no work has addressed up to now the problem of finding significantly smaller subsets of critical pairs for this kind of rules. In this paper we generalize the theory of initial conflicts to rules with ACs in the framework of $\M$-adhesive transformation systems. In particular, the main contributions of this paper (as summarized in \autoref{tab:overview}) are:

\begin{itemize}
\item The definition of the \emph{notion of initial conflict for rules with ACs}, based on a notion of \emph{symbolic transformation pair}, showing that the set of initial conflicts is a \emph{proper subset} of the set of critical pairs and that it is \emph{minimally complete}\footnote{Provided that the considered category has initial conflicts for the plain case.}, in the sense that, no smaller set of symbolic transformation pairs exists that is also complete. In particular, the \emph{cardinality} of the set of initial conflicts is, at most, the cardinality of the set of initial conflicts for the plain case, when disregarding the ACs, plus one. Moreover, as in the plain case, every conflict is an instance of a \emph{unique} initial conflict. 

\item The identification of a class of so-called \emph{regular} initial conflicts that demonstrate a certain kind of regularity in their application conditions.  This allows us to unfold them into a \emph{complete (and in the case of graphs also finite) subset of conflicts}. In particular, we show that, in the case of rules with NACs, initial conflicts are regular, implying that our initial conflicts represent a \emph{conservative extension} of the critical pair theory for rules with NACs.


\end{itemize}

\begin{table*}[htbp]
\small
	\centering
		\begin{tabular}{|p{3.3 cm} | p{3.1cm} | p{4.3cm} | p{3.5cm}|}
		\hline 
    & \textbf{plain rules} & \textbf{rules with NACs} & \textbf{rules with ACs}\\
\hline 
\textbf{critical pairs (CPs)}   & subset of conflicts, \newline complete \newline \cite{Plump93,Plump94,Plump05,EhrigEPT06} & subset of conflicts, \newline complete \newline \cite{LambersEO06,Lambers2009}& symbolic, \newline complete \newline \cite{EhrigHLOG10,EhrigGHLO12} \\
\hline 
\textbf{initial conflicts} & subset of conflicts, \newline min. complete, \newline proper subset of CPs \newline  \cite{LambersBO0T18,AzziCR19} & symbolic (Def.~\ref{def:initial-conflict}), \newline min. complete (Cor.~\ref{cor:minimal-AC}), \newline regular (Thm.~\ref{thm:monotone-NACs}) \& conservative extension of CPs (Thm.~\ref{thm:conservative-NACs})& symbolic (Def.~\ref{def:initial-conflict}), \newline min. complete (Cor.~\ref{cor:minimal-AC}), \newline proper subset of CPs \newline (Thm.~\ref{thm:ic-is-cp})\\
\hline 
		\end{tabular}
		\vspace{.2cm}
	\caption{Critical pairs versus initial conflicts}
	\label{tab:overview}
\end{table*}

%

The paper is organized as follows. We describe \emph{related work} in \autoref{sec:related-work} and, in \autoref{sec:preliminaries}, we present some \emph{preliminary material}, where we also include some new results. More precisely, in \autoref{subsec:graphs} and \autoref{subsec:rules-with-acs} we briefly reintroduce the framework of $\M$-adhesive categories and of rules with ACs; in  \autoref{subsec:critical-pairs} we reintroduce critical pairs for rules with ACs following \cite{EhrigHLOG10,EhrigGHLO12}; in \autoref{subsec:initial-conflicts} we reintroduce initial conflicts for plain rules, and in \autoref{sec:cond-conflict} we introduce initial parallel independent transformation pairs. This result is used in \autoref{sec:cond-conflict}, where we present the main results of the paper about \emph{initial conflicts for rules with ACs}. Then, in \autoref{sec:unfolding} we show our results on \emph{unfolding initial conflicts}. Finally, we conclude in \autoref{sec:conclusion} discussing some future work. The paper only includes proof sketches, detailed proofs can be found in the appendix





\section{Related Work}\label{sec:related-work}

Most work on checking \emph{confluence} for rule-based rewriting systems is based on the seminal paper from Knuth and Bendix~\cite{KB70}, who reduced the problem of checking local confluence to checking the joinability of a finite set of \emph{critical pairs} obtained from superposing or overlapping the left hand sides of pairs of rewriting rules. This technique has been extensively studied and applied in the area of term rewriting systems (see, for instance, \cite{Ohle02}), and it was introduced in the area of \emph{graph transformation} by Plump~\cite{Plump93,Plump94,Plump05} in the context of term-graph and hypergraph rewriting. Moreover, he also proved that (local) confluence of graph transformation systems is undecidable, even for terminating systems, as opposed to what happens in the area of term rewriting systems. However, recently, in \cite{BonchiGKSZ17} it is shown that confluence of terminating DPO transformation of graphs with interfaces is decidable. The authors explain that the reason is that interfaces play the same role as variables in term rewriting systems, where confluence is undecidable for terminating ground (i.e., without variables) systems, but decidable for non-ground ones. 

The notion of critical pairs in the area of graph transformation, as introduced by Plump~\cite{Plump93,Plump94,Plump05}, has the characteristic that their computation is exponential in the size of the preconditions of the rules. For this reason, different \emph{proper subsets of critical pairs} with a considerably reduced size were studied that are still complete~ \cite{LambersEO08,LambersBO0T18,AzziCR19}, clearing the way for a more efficient computation. The notion of \emph{essential critical pair}~\cite{LambersEO08} for graph transformation systems already allowed for a significant reduction, and, the notion of \emph{initial conflict}~\cite{LambersBO0T18}, introduced for the more general $\M$-adhesive systems, allowed for an even larger reduction. A problem with initial conflicts is that not all $\M$-adhesive categories necessarily have them. In this sense, in \cite{LambersBO0T18} it is shown that, in particular, typed graphs have initial conflicts and, a bit later, \cite{AzziCR19} extended that result proving that arbitrary $\M$-adhesive categories satisfying some given conditions also have initial conflicts. Moreover, they provided a simple way of constructing the initial pair of transformations for a given conflict.

A recent line of work concentrates on the development of \emph{multi-granular conflict detection techniques}~\cite{BornL0T17,Lambers0TBH18,LambersBKST19}. In particular, an extensive literature survey shows~\cite{Lambers0TBH18} that conflict detection is used at different levels of granularity depending on its application field. The overview shows that conflict detection can be used for the analysis and design phase of software systems (e.g. for finding inconsistencies in requirement specifications), for model-driven engineering (e.g. supporting model version management), for testing (e.g. generation of interesting test cases), or for optimizing rule-based computations (e.g. avoiding backtracking). These multi-granular techniques are presented for rules without application conditions (ACs). Our work builds further foundations for providing  multi-granular techniques also in the case of rules with ACs in the future. 

The use of (negative) \emph{application conditions} and of graph constraints, to limit the application of graph transformation rules, was introduced in \cite{Ehrig-Habel86a,HW95,HabelHT96}. Based on this notion of graph constraints, in \cite{Rensink04}, Rensink presented a logic for expressing graph properties, closely related to the logic of nested conditions of Habel and Penneman \cite{HabelP09}, shown to have the same expressive power as first-order logic on graphs, and being (refutationally) complete as demonstrated in Lambers and Orejas \cite{LOicgt14}.
Checking confluence for graph transformation systems with application conditions (ACs) has been studied in \cite{LambersEO06,Lambers2009} for the case of negative application conditions (NACs), and in \cite{EhrigHLOG10,EhrigGHLO12} for the more general case of  ACs. In the case of rules with ACs, it is an open issue to also come up with proper subsets of critical pairs of considerably reduced size (analogous to the previously mentioned works for rules without ACs). 


\section{Preliminaries}\label{sec:preliminaries} 

We start with a very brief introduction of $\M$-adhesive categories. We then revisit \emph{rules with nested application conditions (ACs)} (cf.~\autoref{subsec:rules-with-acs}) as well as the main parts of \emph{critical pair theory} for this type of rules~\cite{EhrigHLOG10,EhrigGHLO12} (cf.~\autoref{subsec:critical-pairs}). Thereafter, we reintroduce the notion of \emph{initial conflicts}~\cite{LambersBO0T18} for \emph{plain} rules, i.e. rules without nested application conditions (cf.~\autoref{subsec:initial-conflicts}). We also introduce the notion of \emph{initial parallel independent transformation pairs} as a counterpart (cf.~\autoref{subsec:parallel-independent}), since it will play a particular role when defining initial conflicts for rules with ACs in \autoref{subsec:initial-conflicts}. We assume that the reader is acquainted with the basic theory of DPO graph transformation and, in particular, the standard definitions of typed graphs and typed graph morphisms (see, e.g., \cite{EhrigEPT06}) and its associated category, {\bf Graphs$_{TG}$}.

\subsection{Graphs \& High-Level Structures}\label{subsec:graphs}



The results presented in this paper do not only apply to a specific class of graph transformation systems, like standard (typed) graph transformation systems, but to systems over any $\M$-adhesive category~\cite{EhrigGH10}. The idea behind the consideration of $\M$-adhesive categories is to avoid similar investigations for different instantiations like e.g. Petri nets, hypergraphs, and algebraic specifications. An $\M$-adhesive category is a category $\Cat$ with a distinguished morphism class $\M$ of monomorphisms satisfying certain properties. The most important one is the van Kampen (VK) property stating a certain kind of compatibility of pushouts and pullbacks along $\M$-morphisms. Moreover, additional properties are needed in our context: initial pushouts, describing the existence of a special ``smallest" pushout over a morphism, 
$\Epi'$-$\M$ pair factorizations, extending the classical epi-mono factorization to a pair of morphisms with the same codomain. The definitions of $\M$-adhesive categories, initial pushouts, $\Epi'$-$\M$ pair factorizations, and binary coproducts, can be found in~\cite{EhrigGHLO12,EhrigGHLO14}. The results in this paper require an $\M$-adhesive category where additional properties hold.

\begin{assumption}\label{ass:adhesive} We assume that $\tuple{\Cat,\M}$ is 
an $\M$-adhesive category with 
a unique $\Epi'$-$\M$ pair factorization (needed for \autoref{lem:shift}, \autoref{def:critical}, 
\autoref{thm:completeness}, \autoref{thm:ic-is-cp}, \autoref{cor:monotone-NACs}) and binary coproducts (needed for \autoref{lem:initial-parallel-independent}, \autoref{def:core-parallel}, \autoref{thm:completeness-core-parallel}). For the Local Confluence Theorem for initial conflicts of rules with ACs we in addition need initial pushouts (cf.~\autoref{subsec:completeness-confluence}).  
\end{assumption}


\begin{remark}[$\tuple{\Graphs_{TG},\M}$, $\tuple{\PTNets,\M}$, $\tuple{\Spec,\M_{strict}}$ are $\M$-adhesive and satisfy additional properties~\cite{EhrigEPT06,EhrigGH10}]\label{fac:adh} In particular, the category $\tuple{\Graphs_{TG},\M}$ with the class $\M$ of all injective typed graph morphisms is an $\M$-adhesive category. 
It has a unique $\Epi'$-$\M$ pair factorization where $\Epi'$ is the class of jointly surjective typed graph morphism pairs (i.e., the morphism pairs $(e_1,e_2)$ such that for each $x\in K$ there is a pre-image $a_1\in A_1$ with $e_1(a_1)=x$ or $a_2\in A_2$ with $e_2(a_2)=x$).  Binary coproduct objects correspond to disjoint unions of graphs. All other examples are also $\M$-adhesive categories and satisfy the additional properties for suitable choices of $\M$ and $\Epi'$. 
\end{remark}

\subsection{Rules with Application Conditions and Parallel Independence}\label{subsec:rules-with-acs}

We reintroduce nested application conditions~\cite{HabelP09} (in short, application conditions, or just ACs) following~\cite{EhrigGHLO12}. They generalize the corresponding notions in \cite{HabelHT96,KochMP05,EhrigEHP06}, where a negative (positive) application condition, short NAC (PAC), over a graph $P$, denoted $\neg \exists a$ ($\exists a$) is defined in terms of a morphism $a: P \rightarrow C$. Informally, a morphism $m: P \rightarrow G$ satisfies $\neg \exists a$ ($\exists a$) if there does not exist a morphism $q: C \rightarrow G$ extending $a$ to $m$ (if there exists $q$ extending $a$ to $m$). Then, an AC (also called \emph{nested AC}) is either the special condition $\ctrue$ or a pair of the form $\PE(a,\ac_C)$ or $\neg \PE(a,\ac_C)$, where the first case corresponds to a PAC and the second case to a NAC, and in both cases $\ac_C$ is an additional AC on $C$. Intuitively, a morphism $m: P \rightarrow G$ satisfies $\PE(a,\ac_C)$ if $m$ satisfies $a$ and the corresponding extension $q$ satisfies $\ac_C$. Moreover, ACs (and also NACs and PACs) may be combined with the usual logical connectors.

\begin{definition}[application condition and satisfaction]\label{def:condition}  An \emph{application condition} $\ac_P$ over an object $P$ is inductively defined as follows: 
\begin{itemize}
\item For every morphism $a\colon P\to C$ and every application condition $\ac_C$ over $C$, $\PE(a,\ac_C)$ is an application condition over $P$. 
\item For application conditions $c$, $c_i$ over P with $i\in I$ (for finite index sets $I$), $\neg c$ and $\wedge_{i \in I} c_i$ are application conditions over $P$. 
\end{itemize}
We define inductively when a morphism \emph{satisfies} an application condition:
\begin{itemize}
\item A morphism $p\colon P\to G$ satisfies an application condition $\PE(a,\ac_C)$, denoted $p\models \PE(a,\ac_C)$, if there exists an $\M$-morphism~$q$ such that $q\circ a = p$ and $q\models \ac_C$. 
\item A morphism $p\colon P\to G$ satisfies $\neg c$ if $p$ does not satisfy $c$ and satisfies $\wedge_{i \in I} c_i$ if it satisfies each $c_i$ ($i \in I$). 
\end{itemize}
\end{definition}
\[\tikz[node distance=2em,shape=rectangle,outer sep=1pt,inner sep=2pt]
{
\node(P){$P$};
\node(G)[strictly below right of=P]{$G$};
\node(C)[strictly above right of=G]{$C,$};
\draw[morphism] (P) -- node[overlay,above](a){$a$} (C);
\draw[morphism] (P) -- node[overlay,below left]{$p$} (G);
\draw[altmonomorphism] (C) -- node[overlay,below right](q){$q$} (G);
\draw[draw=white] (a) -- node[overlay](tr1){=} (G);
\node(c)[outer sep=0pt,inner sep=0pt,node distance=0em,strictly right of=C]
{\tikz[draw=black,fill=lightgray]
{
\filldraw (0,0) -- (0.8,0.2) -- node
{\small \;$\ac_C$} (0.8,0) -- (0.8,-0.2) -- (0,0);
}
};
\draw[draw=white] (q) -- node[overlay,sloped](tr1){$\models$} (c);
\node(Y)[node distance=0.1em,strictly right of=c]{$)$};
\node(X)[node distance=0.1em,strictly left of=P]{$\PE($}
;}
\]

Note that the empty conjunction (equiv. to $\ctrue$), satisfied by each morphism, serves as base case in the inductive definition.  Moreover, $\PE a$ (resp. $\PA(a,\ac_C)$)    abbreviates $\PE(a,\ctrue)$ (resp. $\neg\PE(a,\neg\ac_C)$).

ACs are used to restrict the application of rules to a given object. The idea is to equip the precondition (or left hand side) of rules with an application condition\footnote{We could have also allowed to equip the right-hand side of rules with an additional AC, but this case can be reduced to rules with left ACs only as shown in \autoref{lem:left}.}. Then we can only apply a given rule to an object $G$ if the corresponding match morphism satisfies the AC of the rule. However, for technical reasons\footnote{For example, symbolic transformation pairs as introduced later, or also critical pairs for rules with ACs (see \autoref{def:critical}) consist of transformations that do not need to satisfy the associated ACs.}, we also introduce the application of rules \emph{disregarding} the associated ACs.

\begin{definition}[rules and transformations]\label{def:rules} A \emph{rule} $\prule = \tuple{p,\ac_L}$ consists of a \emph{plain} rule $p=\crule{L}{I}{R}$ with $I\injto L$ and $I\injto R$ morphisms in $\M$ and an application condition $\ac_L$ over $L$. 
\[\tikz[node distance=2em,shape=rectangle,outer sep=1pt,inner sep=2pt,label distance=-1.25em]{
\node(L){$L$};
\node(K)[strictly right of=L]{$I$};
\node(R)[strictly right of=K]{$R$};
\node(D)[strictly below of=K]{$D$};
\node(G)[strictly below of=L]{$G$};
\node(H)[strictly below of=R]{$H$};
\draw[altmonomorphism] (K) -- node[overlay,above]{} (L);
\draw[monomorphism] (K) -- node[overlay,above]{} (R);
\draw[altmonomorphism] (D) -- (G);
\draw[monomorphism] (D) -- (H);
\draw[morphism] (L) -- node[overlay,left](m){$m$} (G);
\draw[morphism] (K) -- (D);
\draw[morphism] (R) -- node[overlay,right](m*){$m^*$}(H);
\draw[draw=none] (L) -- node[overlay](po1){(1)} (D);
\draw[draw=none] (R) -- node[overlay](po2){(2)} (D);
\node(acL)[outer sep=0pt,inner sep=0pt,node distance=0em,strictly left of=L]{
\tikz[baseline,draw=black,fill=lightgray]{\filldraw (0,0) -- node[left,pos=0.79,overlay,outer sep=2ex](acL2){\small $\ac_L$} (-0.8,0.12) -- (-0.8,-0.12) -- (0,0);}};
\draw[draw=none] (m) -- node[overlay,sloped](tr1){$\mathrel{=}\joinrel\mathrel{|}$} (acL);
}\]
A \emph{direct transformation} \/ $t: G\dder_{\prule,m,m^*} H$\/  consists of two pushouts (1) and~(2), called DPO, with match $m$ and comatch $m^*$ such that $m\models\ac_L$. $G\hookleftarrow D \hookrightarrow H$ is called the \emph{derived span} of $t$. An \emph{AC-disregarding direct transformation}\/ $G\dder_{\prule,m,m^*} H$\/ consists of DPO (1) and (2), where $m$ does not necessarily need to satisfy $\ac_L$. Given a set of rules $\mathcal{R}$ for $\tuple{\Cat,\M}$, the triple $\tuple{\Cat,\M, \R}$ is an \emph{$\mathcal{M}$-adhesive system}. \end{definition}

\begin{remark} In the rest of the paper we assume that each rule  (resp. transformation or $\mathcal{M}$-adhesive system) comes with ACs. Otherwise, we state that we have a \emph{plain} rule (resp. transformation or $\mathcal{M}$-adhesive system). This plain case can also be seen as a special case of a rule (resp. transformation or $\mathcal{M}$-adhesive system) with ACs in the sense that the ACs are (equivalent to) $\true$. 
\end{remark}

ACs can be shifted over morphisms and rules (from right to left and vice versa) as shown in the following lemma (for constructions see \cite{EhrigGHLO14}~\footnote{Since this construction entails the enumeration of jointly epimorphic morphism pairs, its computation has exponential complexity in the size of the precondition of the rule and the size of the AC.} and \cite{HabelP09,EhrigGHLO14}, respectively). We only describe the right to left case in~\autoref{lem:left}, since the left to right case is symmetrical. 

\begin{lemma}[shift ACs over morphisms~\cite{EhrigGHLO14}]\label{lem:shift} 
There is a transformation $\Shift$ from morphisms and ACs to ACs such that for each AC, $\ac_P$, and each morphism $b\colon P \to P'$, $\Shift$ transforms $\ac_P$ via $b$ into an AC $\Shift(b,\ac_P)$ over $P'$ such that for each morphism $n\colon P'\to H$ it holds that $n\circ b\models \ac_P \shortiff n\models \Shift(b,\ac_P)$.
\end{lemma}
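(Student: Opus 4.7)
The plan is to define $\Shift$ by structural induction on the AC and verify the equivalence $n \circ b \models \ac_P \iff n \models \Shift(b,\ac_P)$ in lockstep with the definition. The Boolean skeleton will be forced: set $\Shift(b, \ctrue) = \ctrue$, $\Shift(b, \neg c) = \neg \Shift(b, c)$ and $\Shift(b, \bigwedge_{i\in I} c_i) = \bigwedge_{i\in I} \Shift(b, c_i)$. For these cases the equivalence is immediate from Definition~\ref{def:condition} and the inductive hypothesis applied to the immediate subformulas, so no real work is needed.

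The substance sits in the quantifier case $\ac_P = \PE(a, \ac_C)$ with $a\colon P\to C$. Here I would mimic the standard Habel--Pennemann construction in the $\M$-adhesive setting: enumerate, up to isomorphism, all commutative squares $a' \circ b = b' \circ a$ with $b\colon P\to P'$ on one side, $a\colon P\to C$ on the other, and a cospan $P' \xrightarrow{a'} C' \xleftarrow{b'} C$ such that $(a',b')$ is an $\Epi'$-$\M$ pair factorization of the joint map into $C'$ (so $a'\in\Epi'$-component, $b'\in\M$). Then define
\[
\Shift(b, \PE(a,\ac_C)) \;=\; \bigvee_{(a',b') \in \SC} \PE\!\bigl(a',\, \Shift(b',\ac_C)\bigr),
\]
where $\SC$ is a set of representatives of such squares (one per isomorphism class of $C'$ with its pair of incoming morphisms). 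The uniqueness of $\Epi'$-$\M$ pair factorizations from Assumption~\ref{ass:adhesive} guarantees that $\SC$ is well-defined; in the graph case $\SC$ is finite, and in general the enumeration is the source of the exponential complexity flagged in the footnote.

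For the equivalence, I would prove both directions separately. For ``$\Leftarrow$'', given $n\colon P'\to H$ with $n \models \Shift(b,\PE(a,\ac_C))$, pick the disjunct $(a',b')$ that is witnessed and an $\M$-morphism $q'\colon C'\to H$ with $q'\circ a' = n$ and $q' \models \Shift(b',\ac_C)$. By the inductive hypothesis, $q' \circ b' \models \ac_C$; setting $q := q'\circ b'$, $\M$-closure under composition gives $q\in\M$, and $q\circ a = q'\circ b'\circ a = q'\circ a'\circ b = n\circ b$, so $q$ witnesses $n\circ b \models \PE(a,\ac_C)$. For ``$\Rightarrow$'', suppose $n\circ b \models \PE(a,\ac_C)$ via some $q\colon C\to H$ in $\M$ with $q\circ a = n\circ b$ and $q\models\ac_C$. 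Apply the $\Epi'$-$\M$ pair factorization to the pair $(n,q)$ with common codomain $H$ to obtain a fresh object $C'$, morphisms $(a',b')$ into $C'$ forming an $\Epi'$-pair, and an $\M$-morphism $q'\colon C'\to H$ with $n = q'\circ a'$ and $q = q'\circ b'$; uniqueness up to iso identifies this with an element of $\SC$. Monicity of $q'$ (since $q'\in\M$) lets me cancel to conclude $a'\circ b = b'\circ a$, and the inductive hypothesis turns $q'\circ b' = q \models \ac_C$ into $q' \models \Shift(b',\ac_C)$, exhibiting the required witness for the corresponding disjunct.

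The main obstacle will be the bookkeeping in the existential case: making the enumeration $\SC$ precise (choosing canonical representatives so that $\Shift$ really is a function, not just a recipe), and invoking the $\Epi'$-$\M$ factorization carefully enough to get both a well-defined construction and a clean round-trip between witnesses $q$ for $\PE(a,\ac_C)$ and witnesses $q'$ for the shifted condition. Everything else reduces to applying Definition~\ref{def:condition} and the inductive hypothesis, so once the quantifier case is nailed down the rest of the proof is routine.
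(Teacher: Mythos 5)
The paper does not prove this lemma itself --- it imports the construction verbatim from \cite{HabelP09,EhrigGHLO14} --- and your reconstruction is exactly that standard construction (Boolean cases homomorphically, existential case via the disjunction over jointly-$\Epi'$ squares with $b'\in\M$), with both directions of the equivalence argued as in the cited sources. The only detail left implicit in your ``$\Rightarrow$'' direction is why the factor $b'$ produced by the $\Epi'$-$\M$ pair factorization of $(n,q)$ lands in $\M$ so that the square really belongs to $\SC$; this follows in one line from $q=q'\circ b'$ with $q,q'\in\M$ and the closure of $\M$ under decomposition, which is part of the $\M$-adhesive axioms.
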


\begin{lemma}[shift ACs over rules \cite{HabelP09,EhrigGHLO14}]\label{lem:left} 
There is a transformation $\Left$ from rules and ACs to ACs such that for every rule $\prule: L \hookleftarrow I \hookrightarrow R$ and every AC on $R$, $\ac_R$, $\Left$ transforms $\ac_R$ via $\prule$ into the AC $\Left(\prule,\ac_R)$ on $L$, such that for every direct transformation $G\dder_{\prule,m,m^*}H$,  $m\models\Left(\prule,\ac_R)\shortiff m^*\models\ac_R$.
\end{lemma}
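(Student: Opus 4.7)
The plan is to define $\Left$ by structural induction on $\ac_R$ and to verify the equivalence $m^*\models\ac_R \shortiff m\models\Left(\prule,\ac_R)$ simultaneously with the definition; only the underlying span $L\hookleftarrow I\hookrightarrow R$ of $\prule$ enters the construction, so throughout I treat $\prule$ as if it were plain. The boolean structure propagates directly: set $\Left(\prule,\ctrue)=\ctrue$, $\Left(\prule,\neg c)=\neg\Left(\prule,c)$, and $\Left(\prule,\wedge_{i\in I} c_i)=\wedge_{i\in I}\Left(\prule,c_i)$. For these cases the equivalence is immediate from the inductive hypothesis and the semantics of $\neg$ and $\wedge$ given in \autoref{def:condition}.

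The heart of the proof is the existential case $\ac_R=\PE(a,\ac_C)$ with $a\colon R\to C$. I would first attempt to form the pushout complement of the composition $I\hookrightarrow R\xrightarrow{a} C$ along $I\hookrightarrow R$: an object $Y$ with $I\hookrightarrow Y$ in $\M$ and $Y\to C$ such that the square on $(I,R,Y,C)$ is a pushout. If this complement does not exist, then no co-match of $\prule$ can factor through $a$ in any DPO setting, so I set $\Left(\prule,\PE(a,\ac_C))=\neg\ctrue$. Otherwise, I form the pushout of $L\hookleftarrow I\hookrightarrow Y$ to obtain $L'$ with morphisms $a'\colon L\to L'$ and $Y\to L'$; the resulting plain rule $\prule'=(L'\hookleftarrow Y\hookrightarrow C)$ has, by the inductive hypothesis applied to $\ac_C$, an AC $\Left(\prule',\ac_C)$ over $L'$, and I set $\Left(\prule,\PE(a,\ac_C))=\PE(a',\Left(\prule',\ac_C))$.

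For the verification of the existential case I would argue both directions via DPO decomposition and composition. Given a direct transformation $G\hookleftarrow D\hookrightarrow H$ via $\prule$ with match $m$ and co-match $m^*$, suppose $q\colon C\hookrightarrow H$ witnesses $m^*\models\PE(a,\ac_C)$, i.e. $q\circ a=m^*$ and $q\models\ac_C$. Pushout decomposition applied to the right-hand DPO square $(I,R,D,H)$ together with the factorisation $Y\to C\xrightarrow{q} H$ yields a morphism $Y\to D$ making $(Y,C,D,H)$ a pushout; stacking this with the pushout $(L,I,L',Y)$ produces a direct transformation via $\prule'$ from $G$ with match $q'\colon L'\to G$ satisfying $q'\circ a'=m$, and co-match $q$. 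The inductive hypothesis then gives $q\models\ac_C \shortiff q'\models\Left(\prule',\ac_C)$, whence $m\models\PE(a',\Left(\prule',\ac_C))$. The converse direction uses pushout composition, rather than decomposition, to build the required $q$ from a witness $q'$ on the left-hand side.

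The main obstacle is the existential case, in two respects. First, the construction is inherently partial because the pushout complement of $I\hookrightarrow R\xrightarrow{a}C$ along $I\hookrightarrow R$ need not exist; treating this cleanly via the $\neg\ctrue$ branch requires showing that in exactly this situation no transformation can ever produce a co-match extending along $a$. Second, the bookkeeping in the pushout stacking is delicate: one must show that the square built from the pushout complement is itself a pushout and that stacking it with the original DPO yields a genuine direct transformation via $\prule'$, which rests on the van Kampen property and pushout decomposition in $\tuple{\Cat,\M}$ rather than on anything specific to graphs. The most subtle point lies in the direction $m\models\Left(\prule,\ac_R)\Rightarrow m^*\models\ac_R$, where the reconstructed co-match $q$ must be shown to land inside $H$ and to satisfy $q\circ a=m^*$, a property secured only by the uniqueness clause of the composed pushouts.
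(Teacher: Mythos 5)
The paper does not prove this lemma itself---it is recalled from \cite{HabelP09,EhrigGHLO14} with the construction deferred to those references---and your argument correctly reconstructs the standard construction given there: propagate the Boolean connectives, and in the existential case take the pushout complement of $I\hookrightarrow R\xrightarrow{a}C$ (returning $\cfalse$ when it fails to exist, justified by uniqueness of pushout complements along $\M$-morphisms), push out along $I\hookrightarrow L$ to obtain $a'\colon L\to L'$ and the derived rule, and recurse. The verification by decomposing and composing the DPO squares is likewise the argument of \cite{HabelP09}, so your proposal matches the (cited, external) proof and no gap needs reporting.
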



For \emph{parallel independence}, when working with rules with ACs, we need not only that each rule does not delete any element  which is part of the match of the other rule, but also that the resulting transformation defined by each rule application still satisfies the ACs of the other rule application. 

\begin{definition}[transformation pairs and parallel independence]\label{def:parallelIndependence}
A \emph{transformation pair} $H_1\ldder_{\prule_1,o_1}G\dder_{\prule_2,o_2} H_2$ is \emph{parallel
independent} if there exists a morphism $d_{12}\colon L_1\to D_2$ such that $k_2\circ d_{12}=o_1$ and $c_{2}\circ d_{12}\models \ac_{L_1}$ and there exists a morphism $d_{21}\colon L_2\to D_1$ such that $k_1\circ d_{21}=o_2$ and $c_{1}\circ d_{21}\models \ac_{L_2}$.

\[\tikz[node distance=4em,shape=rectangle,outer sep=1pt,inner sep=2pt,label distance=-1.25em]{
\node(H){$G$};
\node(D1)[node distance=7em,left of=H]{$D_1$};
\node(G)[left of=D1]{$H_1$};
\node(L1)[above of=G]{$R_1$};
\node(K1)[right of=L1]{$I_1$};
\node(R1)[right of=K1]{$L_1$};
\node(D2)[node distance=7em,right of=H]{$D_2$};
\node(M)[right of=D2]{$H_2$};
\node(R2)[above of=M]{$R_2$};
\node(K2)[left of=R2]{$I_2$};
\node(L2)[left of=K2]{$L_2$};
\draw[altmonomorphism] (K1) -- (L1);
\draw[monomorphism] (K1) -- (R1);
\draw[monomorphism] (D1) -- node[below]{\small $k_1$} (H);
\draw[altmonomorphism] (D1) --  node[below]{\small $c_1$} (G);
\draw[morphism] (L1) -- node[left]{} (G);
\draw[morphism] (K1) -- (D1);
\draw[morphism] (R1) -- node[below=5pt]{\small $o_1$} (H);
\draw[altmonomorphism] (K2) -- (L2);
\draw[monomorphism] (K2) -- (R2);
\draw[altmonomorphism] (D2) -- node[below]{\small $k_2$} (H);
\draw[monomorphism] (D2) -- node[below]{\small $c_2$} (M);
\draw[morphism] (L2) -- node[below=5pt]{\small $o_2$} (H);
\draw[morphism] (K2) -- (D2);
\draw[morphism] (R2) -- (M);
\draw[morphism,dotted] (L2) -- node[left=15pt]{\small $d_{21}$} (D1);
\draw[morphism,dotted] (R1) -- node[right=15pt]{\small $d_{12}$} (D2);
\node(acL1)[outer sep=0pt,inner sep=0pt,node distance=0em,strictly right of=R1]{
\tikz[baseline,draw=black,fill=lightgray]{\filldraw (0,0) -- node[above,pos=0.6,overlay,outer sep=1ex](acL1){\small $\ac_{L_1}$} (0.3,0.12) -- (0.3,-0.12) -- (0,0);}};
\node(acL2)[outer sep=0pt,inner sep=0pt,node distance=1em,left of=L2]{\tikz[baseline,draw=black,fill=lightgray]{\filldraw (0,0) -- node[above,pos=0.6,overlay,outer sep=1ex](acL2){\small $\ac_{L_2}$} (-0.3,0.12) -- (-0.3,-0.12) -- (0,0);}};}\]
\end{definition}

We say that a transformation pair is \emph{in conflict} or \emph{conflicting} if it is parallel dependent.  We distinguish different conflict types, generalizing straightforwardly the conflict characterization introduced for rules with NACs~\cite{LambersEO06}.  The transformation pair $H_1\ldder_{\prule_1,o_1}G\dder_{\prule_2,o_2} H_2$ is a \emph{use-delete} (resp. \emph{delete-use}) conflict if in \autoref{def:parallelIndependence} the commuting morphism $d_{12}$ (resp. $d_{21}$) does not exist, i.e. the second (resp. first) rule deletes something used by the first (resp. second) one. Moreover, it is an \emph{AC-produce} (resp. \emph{produce-AC}) conflict if in \autoref{def:parallelIndependence} the commuting morphism $d_{12}$ (resp. $d_{21}$) exists, but an extended match is produced by the second (resp. first) rule that does not satisfy the rule AC of the first (resp. second) rule. If a transformation pair is an \emph{AC-produce} or \emph{produce-AC} conflict, then we also say that it is an \emph{AC conflict} or \emph{AC conflicting}.

\begin{remark}[use-delete XOR AC-produce] A use-delete (resp. delete-use) conflict  cannot occur simultaneously to an AC-produce (resp. produce-AC) conflict. This is because the AC of the first (resp. second) rule can only be violated iff there exists an extended match for the first (resp. second) rule. However, a use-delete (resp. delete-use) conflict may occur simultaneously to a produce-AC (resp. AC-produce) conflict, since in this case the extended match for the first (resp. second) rule does not exist, whereas the extended match for the second (resp. first) rule exists and violates the AC, i.e. both conflict types occur on opposite sides of the diagram in \autoref{def:parallelIndependence}.
\end{remark}

For grasping the notion of completeness of transformation pairs w.r.t. a property like parallel (in-)dependence, it is first important to understand how a given transformation can be extended to another transformation. In particular, an \emph{extension diagram} describes how a transformation $t\colon G_0\der G_n$ can be extended to a transformation $t'\colon G'_0\der G'_n$ via the same rules and an \emph{extension morphism} $k_0\colon G_0\to G'_0$ that maps $G_0$ to $G'_0$ as shown in the following diagram on the left. For each rule application and transformation step, we have two double pushout diagrams  as shown on the right, where the rule $\rho_{i+1}$ is applied to both $G_i$ and $G_i'$. 

\[\tikz[node distance=2em,shape=rectangle,outer sep=1pt,inner sep=2pt,label distance=-1.25em]{
\node(K){$G_0$};
\node(R)[node distance=3em,strictly right of=K]{$G_n$};
\node(D)[strictly below of=K]{$G'_0$};
\node(H)[strictly below of=R]{$G'_n$};
\node(LL)[node distance=3em,strictly right of=R]{$L_{i+1}$};
\node(KK)[node distance=3em,strictly right of=LL]{$I_{i+1}$};
\node(RR)[node distance=3em,strictly right of=KK]{$R_{i+1}$};
\node(GL)[strictly below of=LL]{$G_{i}$};
\node(GK)[strictly below of=KK]{$D_{i}$};
\node(GR)[strictly below of=RR]{$G_{i+1}$};
\node(GGL)[strictly below of=GL]{$G_{i}'$};
\node(GGK)[strictly below of=GK]{$D_{i}'$};
\node(GGR)[strictly below of=GR]{$G_{i+1}'$};
\draw[morphism] (KK) --  (LL);
\draw[morphism] (KK) --  (RR);
\draw[morphism] (GK) --  (GL);
\draw[morphism] (GK) --  (GR);
\draw[morphism] (GGK) --  (GGL);
\draw[morphism] (GGK) --  (GGR);
\draw[morphism] (LL) --  (GL);
\draw[morphism] (GL) --  (GGL);
\draw[morphism] (KK) --  (GK);
\draw[morphism] (GK) --  (GGK);
\draw[morphism] (RR) --  (GR);
\draw[morphism] (GR) --  (GGR);
\draw[morphism] (K) -- node[overlay,left](m){\small $k_0$} (D);
\draw[morphism] (R) -- node[overlay,right](h){\small $k_n$} (H);
\draw[derivation] (K) -- node[overlay,above](h){\small $*$} (R);
\draw[derivation] (D) -- node[overlay,above](h){\small $*$} (H);
\draw[draw=none] (K) -- node[overlay](po1){\small (1)} (H);}\]

We introduce two different notions of completeness,  distinguishing $\mathcal{M}$-completeness from regular completeness, depending on the membership of the extension morphism in $\mathcal{M}$. It is known that critical pairs (resp. initial conflicts) for \emph{plain rules} are $\mathcal{M}$-complete (resp. complete) w.r.t. parallel dependence~\cite{EhrigEPT06,LambersBO0T18}.  In \autoref{subsec:critical-pairs}, we reintroduce the fact that critical pairs for rules with ACs are $\mathcal{M}$-complete w.r.t. parallel dependence, but as symbolic transformation pairs. We learn in \autoref{sec:cond-conflict} that initial conflicts for rules with ACs are also complete in this symbolic way. 

\begin{definition}[($\mathcal{M}$-)completeness of transformation pairs]\label{def:completeness-concrete}
A set of \emph{transformation pairs} $\mathcal{S}$ for a pair of rules $\tuple{\prule_1,\prule_2}$ is \emph{complete} (resp. \emph{$\mathcal{M}$-complete}) w.r.t. parallel (in-)dependence if there is a pair $P_1\ldder_{\prule_1,o_1}K\dder_{\prule_2,o_2} P_2$ from $\mathcal{S}$ and an extension diagram via extension morphism $m$ (resp. $m\in\M$) for each parallel (in-)dependent direct transformation pair $H_1\ldder_{\prule_1,m_1}G\dder_{\prule_2,m_2} H_2$ .
\begin{figure}
\centering
\begin{tikzpicture}
[node distance=3em,shape=rectangle,outer sep=1pt,inner sep=2pt,label distance=-1.25em]{
\node(K){$K$};
\node(P1)[strictly left of=K]{$P_1$};
\node(P2)[strictly right of=K]{$P_2$};
\node(G)[node distance=2em,strictly below of=K]{$G$};
\node(H1)[strictly left of=G]{$H_1$};
\node(H2)[strictly right of=G]{$H_2$};
\draw[morphism] (K) -- node[overlay,right](m){\small $m$} (G);
\draw[morphism] (P1) -- node[overlay,left](z1){}(H1);
\draw[morphism] (P2) -- node[overlay,right](z2){}(H2);
\draw[derivation] (K) -- node[overlay,above](h2){\small $\prule_1,o_1$}(P1);
\draw[derivation] (K) -- node[overlay,above](h2){\small $\prule_2,o_2$}(P2);
\draw[derivation] (G) -- node[overlay,below](h2){\small $\prule_1,m_1$}(H1);
\draw[derivation] (G) -- node[overlay,below](h2){\small $\prule_2,m_2$}(H2);}
\end{tikzpicture}
\caption{($\mathcal{M}$-)completeness of transformation pairs}\label{fig:completeness}
\end{figure}
\end{definition}

\subsection{Critical Pairs}\label{subsec:critical-pairs}

Critical pairs for plain rules are just transformation pairs, where morphisms $o_1$ and $o_2$ are in $\Epi'$ (i.e., roughly, $K$ is an overlapping of $L_1$ and $L_2$). In the category of  {\bf Graphs} they lead to finite and complete subsets of finite conflicts~\cite{EhrigEHP06} (assumed that the rule graphs are also finite). However, when rules include ACs, we cannot use the same notion of critical pair since, as we show in  \autoref{thm:infinite-set-complete}, in general, for any two rules with ACs, there is no complete set of transformation pairs that is finite. To avoid this problem, our critical pairs for rules with ACs also include ACs, as in ~\cite{EhrigHLOG10,EhrigGHLO12}, where they are proved to be \emph{$\mathcal{M}$-complete}, and they are also finite in the category of {\bf Graphs} (assumed again that the rules are finite). Moreover, the converse property also holds, in the sense that every critical pair can be extended to a conflict (or pair of parallel dependent rule applications). 


In particular, critical pairs are based on the notion of \emph{symbolic transformation pairs}, which are pairs of \emph{AC-disregarding transformations} on some object $K$ with two special ACs on $K$. These two ACs, $\ac_K$ (\emph{extension AC}) and $\ac^*_K$ (\emph{conflict-inducing AC}), are used to characterize which embeddings of this pair, via some morphism $m:K\to G$, give rise to a transformation pair that is parallel dependent. If $m\models\ac_K$, then $m\circ o_1: L_1 \rightarrow G$ and $m\circ o_2: L_2 \rightarrow G$ are two morphisms,  
 satisfying the associated ACs of $\rho_1$ and $\rho_2$, respectively. Moreover, if $m\models\ac^*_K$, then the two transformations $H_1\ldder_{\prule_1,m\circ o_1}G\dder_{\prule_2,m\circ o_2} H_2$ are parallel dependent. 
Symbolic transformation pairs allow us to present critical pairs as well as initial conflicts (cf.~\autoref{subsec:initial-conflicts}) in a compact and unified way, since they both are instances of symbolic transformation pairs. Finally, note that each symbolic transformation pair $stp_K: \tuple{tp_K,ac_K,ac_K^*}$ is by definition uniquely determined (up to isomorphism and equivalence of the extension AC and conflict-inducing AC) by its underlying AC-disregarding transformation pair. 

\begin{definition}[symbolic transformation pair]\label{def:symbolic-pair} Given rules $\prule_1=\tuple{p_1,\ac_{L_1}}$ and $\prule_2=\tuple{p_2,\ac_{L_2}}$, a \emph{symbolic transformation pair} $stp_K: \tuple{tp_K,ac_K,ac_K^*}$ for $\tuple{\prule_1,\prule_2}$ consists of a pair $tp_K: P_1\ldder_{\prule_1,o_1} K\dder_{\prule_2,o_2} P_2$ of AC-disregarding transformations together with  
ACs $\ac_K$ and $\ac^*_{K}$ on $K$ given by: 
\\
\indent $\ac_K=\Shift(o_1,\ac_{L_1})\wedge\Shift(o_2,\ac_{L_2})$, called \emph{extension AC}, and 
\\ 
\indent $\ac^*_K= \neg(\ac^*_{K,d_{12}}\wedge\ac^*_{K,d_{21}})$, called \emph{conflict-inducing AC}
\\ 
with $\ac^*_{K,d_{12}}$ and $\ac^*_{K,d_{21}}$ given as follows:

\[\begin{array}{l}
\mbox{ if }(\exists\;d_{12}\mbox{ with }k_2{\circ}d_{12}{=}o_1) 
\hspace{4pt} \mbox{ then }  \ac^*_{K,d_{12}}=\Left(p^*_2,\Shift(c_2{\circ}d_{12},\ac_{L_1})) \\
\hspace{118pt} \mbox{ else } \ac^*_{K,d_{12}} = \cfalse
\\
\mbox{ if }(\exists\;d_{21}\mbox{ with }k_1{\circ}d_{21}{=}o_2) 
\hspace{4pt} \mbox{ then } \ac^*_{K,d_{21}}=\Left(p^*_1,\Shift(c_1{\circ}d_{21},\ac_{L_2})) \\
\hspace{118pt} \mbox{ else } \ac^*_{K,d_{21}} = \cfalse
\end{array}\] 
where $p^*_1=\nrule{K}{D_1}{P_1}{k_1}{c_1}$ and $p^*_2=\nrule{K}{D_2}{P_2}{k_2}{c_2}$ are defined by the corresponding double pushouts.\\ 

\[\tikz[node distance=4em,shape=rectangle,outer sep=1pt,inner sep=2pt,label distance=-1.25em]{
\node(H){$K$};
\node(D1)[node distance=7em,left of=H]{$D_1$};
\node(G)[left of=D1]{$P_1$};\node(G')[node distance=2em,left of=G]{$p^*_1:$}; 
\node(L1)[above of=G]{$R_1$};\node(L1')[node distance=2em,left of=L1]{$p_1:$};
\node(K1)[right of=L1]{$I_1$};
\node(R1)[right of=K1]{$L_1$};
\node(D2)[node distance=7em,right of=H]{$D_2$};
\node(M)[right of=D2]{$P_2$};\node(M')[node distance=2em,right of=M]{$:p^*_2$};
\node(R2)[above of=M]{$R_2$};\node(R2')[node distance=2em,right of=R2]{$:p_2$};
\node(K2)[left of=R2]{$I_2$};
\node(L2)[left of=K2]{$L_2$};
\draw[altmonomorphism] (K1) -- (L1);
\draw[monomorphism] (K1) -- (R1);
\draw[monomorphism] (D1) -- node[below]{\small $k_1$} (H);
\draw[altmonomorphism] (D1) --  node[below]{\small $c_1$} (G);
\draw[morphism] (L1) -- node[left]{} (G);
\draw[morphism] (K1) -- (D1);
\draw[morphism] (R1) -- node[below = 5pt]{\small $o_1$} (H);
\draw[altmonomorphism] (K2) -- (L2);
\draw[monomorphism] (K2) -- (R2);
\draw[altmonomorphism] (D2) -- node[below]{\small $k_2$} (H);
\draw[monomorphism] (D2) -- node[below]{\small $c_2$} (M);
\draw[morphism] (L2) -- node[below=5pt]{\small $o_2$} (H);
\draw[morphism] (K2) -- (D2);
\draw[morphism] (R2) -- (M);
\draw[morphism,dotted] (L2) -- node[left=15pt]{\small $d_{21}$} (D1);
\draw[morphism,dotted] (R1) -- node[right=15pt]{\small $d_{12}$} (D2);
\node(acL1)[outer sep=0pt,inner sep=0pt,node distance=0em,strictly right of=R1]{
\tikz[baseline,draw=black,fill=lightgray]{\filldraw (0,0) -- node[above,pos=0.6,overlay,outer sep=1ex](acL1){\small $\ac_{L_1}$} (0.3,0.12) -- (0.3,-0.12) -- (0,0);}};
\node(acL2)[outer sep=0pt,inner sep=0pt,node distance=1em,left of=L2]{\tikz[baseline,draw=black,fill=lightgray]{\filldraw (0,0) -- node[above,pos=0.6,overlay,outer sep=1ex](acL2){\small $\ac_{L_2}$} (-0.3,0.12) -- (-0.3,-0.12) -- (0,0);}};}\]
\end{definition}

A \emph{critical pair}\footnote{A symbolic transformation pair with matches belonging to $\Epi'$ is called a weak critical pair in~\cite{EhrigHLOG10,EhrigGHLO12}} is now a symbolic transformation pair in a minimal context such that there exists at least one extension to a pair of transformations being parallel dependent.

\begin{definition}[critical pair]\label{def:critical} Given rules $\prule_1=\tuple{p_1,\ac_{L_1}}$ and $\prule_2=\tuple{p_2,\ac_{L_2}}$, a \emph{critical pair} for $\tuple{\prule_1,\prule_2}$ is a symbolic transformation pair $stp_K: \tuple{tp_K,\ac_K,\ac^*_K}$, where the match pair $(o_1,o_2)$ of  $tp_K$ is in $\Epi'$, and there exists a morphism $m\colon K\to G \in \M$ 
such that $m\models\ac_K\wedge\ac^*_{K}$ and  $m_i=m \circ o_i$, for $i=1,2$, satisfy the gluing conditions, i.e.\;$m_i$ has a pushout complement w.r.t. $p_i$.\end{definition}


\begin{definition}[($\mathcal{M}$-)completeness of symbolic transformation pairs]\label{def:completeness-symbolic}
A set of symbolic transformation pairs $\mathcal{S}$ for a pair of rules $\tuple{\prule_1,\prule_2}$ is \emph{complete} (resp. \emph{$\mathcal{M}$-complete}) w.r.t. parallel dependence if there is a symbolic transformation pair $stp_K: \tuple{tp_K: P_1\ldder_{\prule_1,o_1}K\dder_{\prule_2,o_2} P_2,\ac_K,\ac^*_K}$ from $\mathcal{S}$ and an extension diagram as depicted in \autoref{fig:completeness} with $m: K\rightarrow G$ (resp. $m: K\rightarrow G \in \mathcal{M}$) and $m\models\ac_K\wedge\ac^*_K$ for each parallel dependent direct transformation $H_1\ldder_{\prule_1,m_1}G\dder_{\prule_2,m_2} H_2$.
\end{definition}

\begin{theorem}[$\mathcal{M}$-completeness of critical pairs~\cite{EhrigHLOG10,EhrigGHLO12}]\label{thm:completeness} The set of \emph{critical pairs} for a pair of rules $\tuple{\prule_1,\prule_2}$ is \emph{$\mathcal{M}$-complete} w.r.t. parallel dependence. Moreover, for each critical pair $P_1\ldder_{\prule_1,o_1}K\dder_{\prule_2,o_2} P_2$ for $\tuple{\prule_1,\prule_2}$ there is a parallel dependent pair $H_1\ldder_{\prule_1,m_1}G\dder_{\prule_2,m_2} H_2$ and a morphism $m\colon K\to G\in\M$ such that $m\models\ac_K\wedge\ac^*_K$ leading to the above extension diagram.
\end{theorem}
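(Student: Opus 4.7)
The plan is to establish both parts by a single construction based on the $\Epi'$-$\M$ pair factorization of a parallel dependent pair, and then to read the \emph{moreover} clause off the definition of critical pair.

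First, given a parallel dependent direct transformation pair $H_1 \ldder_{\prule_1,m_1} G \dder_{\prule_2,m_2} H_2$, I would apply the unique $\Epi'$-$\M$ pair factorization (\autoref{ass:adhesive}) to the match pair $(m_1,m_2)$, obtaining an object $K$, morphisms $o_i : L_i \to K$ with $(o_1,o_2) \in \Epi'$, and $m : K \to G$ in $\M$ such that $m \circ o_i = m_i$. Because $m \in \M$ and the gluing conditions hold for $m_i$ w.r.t. $p_i$, standard $\M$-adhesive machinery (pushout–pullback decomposition along $\M$-morphisms) yields pushout complements for $o_i$ w.r.t. $p_i$, giving AC-disregarding transformations $P_1 \ldder_{\prule_1,o_1} K \dder_{\prule_2,o_2} P_2$ and, by the universal property of the pushout, induced $\M$-morphisms $P_i \to H_i$ forming the extension diagram. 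This produces the candidate symbolic transformation pair $stp_K = \tuple{tp_K,\ac_K,\ac^*_K}$ of \autoref{def:symbolic-pair}.

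Next, I would verify that $m \models \ac_K \wedge \ac^*_K$. For the extension AC, \autoref{lem:shift} gives $m \models \Shift(o_i,\ac_{L_i}) \shortiff m \circ o_i = m_i \models \ac_{L_i}$, and the latter holds because the original transformations are valid. For the conflict-inducing AC, I would argue by case analysis on the four conflict types. In a use-delete (resp. delete-use) conflict, the concrete $d_{12}^G : L_1 \to D_2^G$ (resp. $d_{21}^G$) with the required commutativity does not exist; since $m \in \M$ and the abstract pushout complements map via $m$ to the concrete ones, the abstract $d_{12}$ (resp. $d_{21}$) cannot exist either, so $\ac^*_{K,d_{12}} = \cfalse$ (resp. $\ac^*_{K,d_{21}} = \cfalse$), and consequently $m \models \neg(\ac^*_{K,d_{12}} \wedge \ac^*_{K,d_{21}}) = \ac^*_K$. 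In an AC-produce (resp. produce-AC) conflict, $d_{12}$ (resp. $d_{21}$) exists but the corresponding extended match after applying the opposite rule violates the AC of the other rule; translating this failure through \autoref{lem:shift} and \autoref{lem:left} yields $m \not\models \ac^*_{K,d_{12}}$ (resp. $\ac^*_{K,d_{21}}$), again giving $m \models \ac^*_K$. This establishes that $stp_K$ is a critical pair and that the extension diagram via $m \in \M$ fulfils the requirements of \autoref{def:completeness-symbolic}.

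Finally, the \emph{moreover} clause is essentially definitional: \autoref{def:critical} stipulates that every critical pair comes with an $\M$-embedding $m : K \to G$ satisfying $\ac_K \wedge \ac^*_K$ and such that $m_i = m \circ o_i$ admit pushout complements. Reversing the construction of the previous paragraphs, these pushout complements yield concrete direct transformations $H_1 \ldder_{\prule_1,m_1} G \dder_{\prule_2,m_2} H_2$ whose matches satisfy $\ac_{L_i}$ (by $m \models \ac_K$) and which are parallel dependent (by $m \models \ac^*_K$, rewinding the case analysis above).

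The main obstacle will be the careful bookkeeping required in the conflict-inducing AC step: one must show that the existence of commuting morphisms $d_{12}, d_{21}$ at the abstract level $K$ corresponds faithfully, via the $\M$-morphism $m$, to the existence of the concrete ones at $G$, and that the logical encoding by $\Shift$ and $\Left$ correctly captures AC violation on both sides. This relies on the stability of pushout complements under $\M$-morphisms in an $\M$-adhesive category and on the full strength of Lemmas~\ref{lem:shift} and~\ref{lem:left}; the rest of the argument is essentially a diagram chase driven by the factorization.
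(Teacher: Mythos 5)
Your proposal is essentially correct and follows the standard argument: note that the paper itself does not prove this theorem but imports it from the cited works \cite{EhrigHLOG10,EhrigGHLO12}, and your construction --- $\Epi'$-$\M$ pair factorization of the matches, restriction of the transformations via pushout--pullback decomposition along the $\M$-morphism $m$, verification of $\ac_K$ via \autoref{lem:shift}, and a case analysis on the four conflict types for $\ac^*_K$ using \autoref{lem:left} --- is exactly the proof given there. The only point to tighten is the correspondence between the abstract $d_{12},d_{21}$ and their concrete counterparts, which rests on $D_i$ being a pullback of $D_i^G$ along $m\in\M$; you gesture at this correctly.
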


Note that based on $\mathcal{M}$-completeness it is possible to formulate also a Local Confluence Theorem for critical pairs of rules with ACs for $\mathcal{M}$-adhesive categories with $\mathcal{M}$-initial pushouts~\cite{EhrigHLOG10,EhrigGHLO12}.


\subsection{Initial Conflicts for Plain Rules}\label{subsec:initial-conflicts}
\emph{Initial conflicts} for plain rules follow an alternative approach to the original idea of critical pairs. Instead of considering all conflicting transformations in a minimal context (materialized by a pair of jointly epimorphic matches), 
initial conflicts use the notion of \emph{initiality of transformation pairs} to obtain a more declarative view on the minimal context of critical pairs. Each initial conflict is a critical pair but not the other way round. Moreover, all initial conflicts for plain rules are complete w.r.t. parallel dependence and they still satisfy the Local Confluence Theorem for plain rules. Consequently, initial conflicts for plain rules represent an important, proper   
subset of critical pairs for performing static conflict detection as well as local confluence analysis. 
The contribution of this paper is to demonstrate how to achieve a similar situation for rules with ACs. 

\begin{definition}[initial transformation pair]
\label{def:ini_pair}
Given a pair of plain direct transformations $tp: H_1\ldder_{p_1,m_1}G\dder_{p_2,m_2} H_2$, then $tp^I: H^I_1\ldder_{p_1,m^I_1}G^I \dder_{p_2,m^I_2} H^I_2$ is an \emph{initial  transformation pair} for $tp$ if it can be embedded into $tp$ via extension diagrams (1) and (2) and extension morphism $f^I$ as in \autoref{fig:initial-tp} such that for each  transformation pair $tp': H'_1\ldder_{p_1,m'_1}G'\dder_{p_2,m'_2} H'_2$ that can be embedded into $tp$ via extension diagrams (3) and (4) and extension morphism $f$ as in \autoref{fig:initial-tp} it holds that $tp^I$ can be embedded into $tp'$ via unique extension diagrams (5) and (6) and unique vertical morphism $f'^I$ s.t. $f \circ f'^I = f^I$. 
\end{definition}

\begin{figure}[h]
\[
\xymatrix{
H^I_1 \ar[d]_{g^I_1} \ar@{}|{(1)}[dr] & G^I \ar@{=>}[l]_(.5){p_1,m^I_1}  \ar@{=>}[r]^(.5){p_2,m^I_2} \ar@{}|{(2)}[dr] \ar[d]_{f^I}    & H^I_2 \ar[d]^{g^I_2}\\
H_1                         & G \ar@{=>}[l] _(.5){p_1,m_1} \ar@{=>}[r] ^(.5){p_2,m_1} & H_2
}
\vspace{.5cm}
\xymatrix{
H^I_1 \ar[d]_{g'^I_1} \ar@{}|{(5)}[dr] & G^I \ar@{=>}[l]_(.5){p_1,m^I_1}  \ar@{=>}[r]^(.5){p_2,m^I_2} \ar@{}|{(6)}[dr] \ar[d]_{f'^I}    & H^I_2 \ar[d]^{g'^I_2}\\
H'_1 \ar[d]_{g_1} \ar@{}|{(3)}[dr] & G' \ar@{=>}[l]_(.5){p_1,m\rq{}_1}  \ar@{=>}[r]^(.5){p_2,m\rq{}_2} \ar@{}|{(4)}[dr] \ar[d]_{f}    & H'_2 \ar[d]^{g_2}\\
H_1                         & G \ar@{=>}[l] _(.5){p_1,m_1} \ar@{=>}[r] ^(.5){p_2,m_2}                             & H_2
}
\]
\vspace{-.4cm}
\caption{Initial transformation pair $H^I_1\ldder_{p_1,m^I_1}G^I \dder_{p_2,m^I_2} H^I_2$ for $H_1\ldder_{p_1,m_1}G\dder_{p_2,m_2} H_2$ 
}
\label{fig:initial-tp}
\end{figure}

As shown in~\cite{LambersBO0T18} an initial transformation pair is \emph{unique} up to isomorphism w.r.t. a given transformation pair for plain rules.  The notion of initial conflicts is based on the requirement of the  \emph{existence of initial transformation pairs} for parallel dependent or \emph{conflicting} plain transformation pairs. 
Note that for the category of typed graphs, it is shown in~\cite{LambersBO0T18} that this requirement holds. Moreover, \cite{AzziCR19} extended that result proving that arbitrary $\M$-adhesive categories fulfilling some extra conditions also satisfy it. 

\begin{definition}[existence of initial transformation pair for conflict]\label{def:existence-initial}
A plain $\mathcal{M}$-adhesive system has \emph{initial transformation pairs for conflicts} if, for each transformation pair $tp$ in conflict, the initial transformation pair $tp^I$ exists. 
\end{definition} 

Now initial conflicts for plain rules represent the set of all possible ``smallest'' conflicts. It is shown in~\cite{LambersBO0T18} that for a plain $\mathcal{M}$-adhesive system each initial conflict is a special critical pair. 

\begin{definition}[initial conflict]\label{def:core}
Given a plain $\mathcal{M}$-adhesive system with initial transformation pairs for conflicts, a pair of direct transformations in conflict $tp: H_1\ldder_{p_1,m_1}G\dder_{p_2,m_2} H_2$ is an \emph{initial conflict} if it is isomorphic to the initial transformation pair $tp^I$ for $tp$.
\end{definition}

Initial conflicts for plain rules are complete as transformation pairs w.r.t. parallel dependence, whereas critical pairs for plain rules are $\mathcal{M}$-complete~\cite{EhrigEHP06}.  

\begin{theorem}[completeness of initial conflicts~\cite{LambersBO0T18}]
\label{thm:completeness-core}
Consider a plain $\mathcal{M}$-adhesive system with initial transformation pairs for conflicts. The set of \emph{initial conflicts} for a pair of plain rules $\tuple{p_1,p_2}$ is \emph{complete} w.r.t. parallel dependence. 
\end{theorem}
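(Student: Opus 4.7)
The plan is to reduce the theorem to two subclaims: (i) for every conflicting pair $tp$ the unique initial transformation pair $tp^I$ is itself conflicting, and (ii) that $tp^I$ is, in fact, an initial conflict in the sense of \autoref{def:core}. Once (i) and (ii) hold, the extension diagrams $(1)$ and $(2)$ from \autoref{def:ini_pair} supply precisely the data that \autoref{def:completeness-concrete} requires (with extension morphism $f^I\colon G^I\to G$), and completeness follows.

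First I would fix a parallel dependent transformation pair $tp\colon H_1\ldder_{p_1,m_1}G\dder_{p_2,m_2}H_2$. By the standing assumption that the $\mathcal{M}$-adhesive system has initial transformation pairs for conflicts (\autoref{def:existence-initial}), an initial transformation pair $tp^I\colon H^I_1\ldder_{p_1,m^I_1}G^I\dder_{p_2,m^I_2}H^I_2$ exists and comes equipped with extension diagrams embedding it into $tp$ via extension morphism $f^I\colon G^I\to G$. In particular, the two DPO diagrams of $tp^I$ sit over the two DPO diagrams of $tp$, yielding induced morphisms $D^I_1\to D_1$ and $D^I_2\to D_2$ between the pushout complements.

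The main step is subclaim (i), proved by contraposition: if $tp^I$ were parallel independent, then $tp$ would be parallel independent. Assume witnesses $d^I_{12}\colon L_1\to D^I_2$ with $k^I_2\circ d^I_{12}=o^I_1$ and $d^I_{21}\colon L_2\to D^I_1$ with $k^I_1\circ d^I_{21}=o^I_2$ (for plain rules, the AC conditions of \autoref{def:parallelIndependence} are vacuous). Composing with the induced morphisms $D^I_i\to D_i$ and using that the two DPO squares of $tp^I$ map into those of $tp$, one checks that $d_{12}:=(D^I_2\to D_2)\circ d^I_{12}$ and $d_{21}:=(D^I_1\to D_1)\circ d^I_{21}$ are commuting morphisms for $tp$, contradicting parallel dependence. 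Hence $tp^I$ is conflicting. I expect this diagram-chase to be the only nontrivial verification, and it is genuinely straightforward for plain rules precisely because there are no ACs whose satisfaction could fail to lift along $f^I$.

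For subclaim (ii) I would invoke uniqueness of initial transformation pairs (from \cite{LambersBO0T18}). Since $tp^I$ embeds into itself via identity extension morphisms and satisfies the universal property of \autoref{def:ini_pair}, and since the initial transformation pair of $tp^I$ is unique up to isomorphism, $(tp^I)^I\cong tp^I$. Therefore $tp^I$ is an initial conflict by \autoref{def:core}. The extension diagrams from $tp^I$ into $tp$ are then exactly the witness of completeness demanded by \autoref{def:completeness-concrete}, concluding the proof.
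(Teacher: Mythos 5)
Your proposal is correct and follows essentially the argument of~\cite{LambersBO0T18}, which this paper only cites rather than reproves: take the initial transformation pair guaranteed by \autoref{def:existence-initial}, observe that it is itself conflicting by conflict inheritance (your subclaim (i) is exactly the plain-rule version of \autoref{lem:inheritance}), and conclude that it is an initial conflict since initial transformation pairs are initial for themselves by uniqueness. The diagram chase composing $d^I_{12}$, $d^I_{21}$ with the induced morphisms $D^I_i\to D_i$ is the right verification, and the extension diagrams then witness completeness as required by \autoref{def:completeness-concrete}.
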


Initial conflicts for plain rules are moreover minimally complete as transformation pairs w.r.t. parallel dependence, i.e. there does not exist any smaller set, which is also complete.

\begin{corollary}[minimally complete]\label{cor:minimal}
Consider a plain $\mathcal{M}$-adhesive system with initial transformation pairs for conflicts. The set of initial conflicts $\mathcal{S}$ (up to isomorphism) for a pair of plain rules $\tuple{p_1,p_2}$ is \emph{minimally complete} w.r.t. parallel dependence, i.e. there does not exist any smaller set $\mathcal{S'}$ of conflicts for $\tuple{p_1,p_2}$ that is complete w.r.t. parallel dependence.
\end{corollary}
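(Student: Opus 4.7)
The plan is to proceed by contradiction: suppose some set $\mathcal{S}'$ of conflicts for $\tuple{p_1,p_2}$ with $|\mathcal{S}'| < |\mathcal{S}|$ (up to isomorphism) is also complete with respect to parallel dependence. I will construct an injection $\phi \colon \mathcal{S} \to \mathcal{S}'$, contradicting the cardinality assumption. For each initial conflict $ic \in \mathcal{S}$, completeness of $\mathcal{S}'$ applied to the conflict $ic$ yields some $tp' \in \mathcal{S}'$ together with an extension morphism $f$ embedding $tp'$ into $ic$; fix such a choice and set $\phi(ic) = tp'$.

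The heart of the proof is to show that, for this $\phi(ic) = tp'$, the initial transformation pair of $tp'$ is isomorphic to $ic$ itself. Since $ic \in \mathcal{S}$ is an initial conflict, $ic \cong ic^I$ and the canonical embedding $f^I_{ic}$ can be taken, up to isomorphism, to be $\text{id}_{ic}$. Applying the defining universal property of $ic^I$ (\autoref{def:ini_pair}) to the embedding $f \colon tp' \to ic$ supplies a unique extension morphism $f^* \colon ic \to tp'$ with $f \circ f^* = \text{id}_{ic}$. I then verify that $f^*$ witnesses $ic$ as an initial transformation pair for $tp'$: given any conflict $tp''$ embedded into $tp'$ via some extension morphism $g$, the composition $f \circ g$ embeds $tp''$ into $ic$, so initiality of $ic$ for itself gives a unique morphism $h \colon ic \to tp''$ satisfying $(f \circ g) \circ h = \text{id}_{ic}$. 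Both $g \circ h$ and $f^*$ then satisfy $f \circ (-) = \text{id}_{ic}$, so uniqueness forces $g \circ h = f^*$; uniqueness of $h$ itself follows analogously from applying the same universal property to any hypothetical competitor. Uniqueness (up to isomorphism) of initial transformation pairs~\cite{LambersBO0T18} then yields $ic \cong ic_{tp'}$.

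Injectivity of $\phi$ is then immediate: if $\phi(ic_1) = \phi(ic_2) = tp'$, then $ic_1 \cong ic_{tp'} \cong ic_2$, so $ic_1$ and $ic_2$ represent the same element of $\mathcal{S}$, contradicting $|\mathcal{S}'| < |\mathcal{S}|$. The main technical obstacle I anticipate lies in the diagram-chasing behind the notion of ``extension morphism'': each such morphism is really a compatible triple intertwining the two double-pushout squares of a transformation pair, so every composition, factorization and uniqueness statement above must be verified componentwise. This is routine in an $\M$-adhesive category and is already built into the framework of \autoref{def:ini_pair} and \autoref{thm:completeness-core}, so no new category-theoretic ingredient beyond uniqueness of initial transformation pairs is required.
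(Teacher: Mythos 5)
Your proof is correct and takes essentially the same route as the paper's: both hinge on the observation that an initial conflict $ic$ is the (unique) initial transformation pair of any conflict that embeds into it, combined with uniqueness of initial transformation pairs. The only difference is presentational — you package the counting as injectivity of a map $\mathcal{S}\to\mathcal{S}'$, whereas the paper argues by pigeonhole that some element of $\mathcal{S}'$ would embed into two non-isomorphic initial conflicts; your write-up actually spells out the universal-property verification in more detail than the paper does.
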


The Local Confluence Theorem (requiring initial POs) can be formulated for initial conflicts of plain rules~\cite{LambersBO0T18} similarly to the one for classical critical pairs (for plain rules)~\cite{EhrigEPT06}.
%

\subsection{Initial Parallel Independent Transformation Pairs for Plain Rules}\label{subsec:parallel-independent}

In this section, we show the existence of initial transformation pairs for \emph{parallel independent transformation pairs}, allowing us to define a \emph{complete} subset also w.r.t. parallel independence. The proof requires the existence of binary coproducts.

\begin{lemma}[existence of initial transformation pair for parallel independent transformation pair]
\label{lem:initial-parallel-independent} 
Given a pair of parallel independent plain direct transformations $tp: H_1\ldder_{p_1,m_1}G\dder_{p_2,m_2} H_2$, then $tp_{L_1 +L_2}: R_1 + L_2 \ldder_{p_1, i_1} L_1 + L_2 \dder_{p_2, i_2} L_1 + R_2$, where $i_1: L_1 \rightarrow L_1 + L_2$ and $i_2: L_2 \rightarrow L_1 + L_2$ are the coproduct morphisms, is initial for $tp$.
\end{lemma}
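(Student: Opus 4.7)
The plan is to verify that the pair $tp_{L_1+L_2}$ satisfies the universal property of initial transformation pair from \autoref{def:ini_pair}. First, I would establish the embedding of $tp_{L_1+L_2}$ into $tp$. The universal property of the coproduct applied to the pair $(m_1,m_2)$ yields a unique morphism $f^I\colon L_1 + L_2 \to G$ with $f^I \circ i_j = m_j$ for $j=1,2$. The two direct transformations of $tp_{L_1+L_2}$ are well-defined DPOs: applying $p_1$ with match $i_1$ admits $I_1 + L_2$ as pushout complement (leaving the $L_2$-summand untouched) and $R_1 + L_2$ as the resulting object; symmetrically for $p_2$ at $i_2$.

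Next, I would build extension diagrams (1) and (2) witnessing the embedding. For diagram (1), the crucial step is the construction of the vertical $D$-level morphism $I_1 + L_2 \to D_1$: using the coproduct universal property on the morphism $I_1 \to D_1$ (from the DPO of $tp$) paired with $d_{21}\colon L_2 \to D_1$ (the parallel-independence witness satisfying $k_1 \circ d_{21} = m_2$), we obtain a unique morphism compatible with $c_1$. The $H$-level morphism $R_1 + L_2 \to H_1$ then follows by the universal property of the right-hand pushout in the DPO of $tp$. Diagram (2) is constructed symmetrically using $d_{12}$. This $D$-level construction is the main technical point: without parallel independence there would be no morphism from the $L_2$-summand (resp.\ $L_1$-summand) into $D_1$ (resp.\ $D_2$), and this is precisely where the hypothesis is used essentially.

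For initiality, suppose $tp'$ embeds into $tp$ via extension morphism $f\colon G'\to G$. Since in the plain setting pushout complements are pullbacks in $\M$-adhesive categories, the parallel-independence witnesses $d_{12}, d_{21}$ of $tp$ can be lifted to witnesses $d'_{12}, d'_{21}$ of $tp'$ via the pullback universal property, so $tp'$ is itself parallel independent. Applying the construction of the preceding paragraph to $tp'$ yields extension diagrams (5) and (6) with vertical morphism $f'^I\colon L_1 + L_2 \to G'$, namely the unique coproduct morphism induced by $(m'_1, m'_2)$. The identity $f \circ f'^I = f^I$ follows from $(f \circ f'^I) \circ i_j = f \circ m'_j = m_j = f^I \circ i_j$ together with uniqueness in the coproduct universal property, and uniqueness of $f'^I$ as well as of the induced $D$- and $H$-level morphisms follows from the same universal properties. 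The remaining checks are routine manipulations of coproduct and pushout diagrams.
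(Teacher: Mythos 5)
Your proof follows essentially the same route as the paper's: the embedding of $tp_{L_1+L_2}$ into $tp$ is obtained from the coproduct mediating morphism together with the parallel-independence witnesses $d_{12},d_{21}$ used to build the $D$-level morphisms, and initiality then reduces to uniqueness of the coproduct mediating morphism. The only differences are that the paper verifies the pushout squares of the extension diagrams explicitly via the Butterfly Lemma where you leave this as a routine check, and that you make explicit (via the pullback property of the extension squares) that $tp'$ inherits parallel independence from $tp$ --- a point the paper's proof uses silently when it applies its auxiliary coproduct-extension lemma to $tp'$.
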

\begin{proof}[idea]
The key issue in this proof is to show that, if $m: L_1 + L_2 \rightarrow G$ is the mediating morphism for $m_1: L_1 \rightarrow G$ and  $m_2: L_2 \rightarrow G$, then $m$ defines the extension diagrams in Fig. \ref{fig:embedding-pi}. \qed
\end{proof}

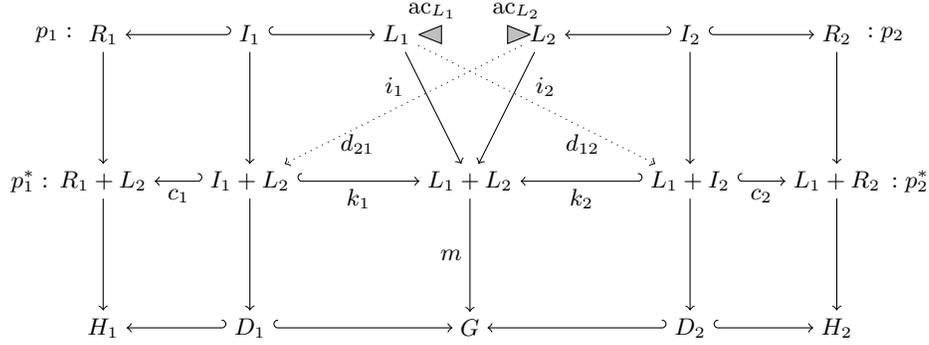
\begin{figure}
\centering
\begin{tikzpicture}[node distance=6em,shape=rectangle,outer sep=1pt,inner sep=2pt,label distance=-1.25em]{
\node(H){$L_1 + L_2$};
\node(D1)[node distance=9em,left of=H]{$I_1 + L_2$};
\node(G)[left of=D1]{$R_1 + L_2$};\node(G')[node distance=3em,left of=G]{$p^*_1:$}; 
\node(L1)[above of=G]{$R_1$};\node(L1')[node distance=2em,left of=L1]{$p_1:$};
\node(K1)[right of=L1]{$I_1$};
\node(R1)[right of=K1]{$L_1$};
\node(D2)[node distance=9em,right of=H]{$L_1 + I_2$};
\node(M)[right of=D2]{$L_1 + R_2$};\node(M')[node distance=3em,right of=M]{$:p^*_2$};
\node(R2)[above of=M]{$R_2$};\node(R2')[node distance=2em,right of=R2]{$:p_2$};
\node(K2)[left of=R2]{$I_2$};
\node(L2)[left of=K2]{$L_2$};
\node(G2)[below of=H]{$G$};
\node(D1P)[below of=D1]{$D_1$};
\node(H1)[below of=G]{$H_1$};
\node(D2P)[below of=D2]{$D_2$};
\node(H2)[below of=M]{$H_2$};
\draw[altmonomorphism] (K1) -- (L1);
\draw[monomorphism] (D2P) -- (H2);
\draw[altmonomorphism] (D2P) -- (G2);
\draw[monomorphism] (D1P) -- (G2);
\draw[altmonomorphism] (D1P) -- (H1);
\draw[morphism] (H) -- node[left]{\small $m$} (G2);
\draw[morphism] (G) -- (H1);
\draw[morphism] (D1) -- (D1P);
\draw[morphism] (D2) -- (D2P);
\draw[morphism] (M) -- (H2);
\draw[monomorphism] (K1) -- (R1);
\draw[monomorphism] (D1) -- node[below]{\small $k_1$} (H);
\draw[altmonomorphism] (D1) --  node[below]{\small $c_1$} (G);
\draw[morphism] (L1) -- node[left]{} (G);
\draw[morphism] (K1) -- (D1);
\draw[morphism] (R1) -- node[above =8pt, left = 8pt]{\small $i_1$} (H);
\draw[altmonomorphism] (K2) -- (L2);
\draw[monomorphism] (K2) -- (R2);
\draw[altmonomorphism] (D2) -- node[below]{\small $k_2$} (H);
\draw[monomorphism] (D2) -- node[below]{\small $c_2$} (M);
\draw[morphism] (L2) -- node[above=8pt, right = 8pt]{\small $i_2$} (H);
\draw[morphism] (K2) -- (D2);
\draw[morphism] (R2) -- (M);
\draw[morphism,dotted] (L2) -- node[below = 15pt, left=8pt]{\small $d_{21}$} (D1);
\draw[morphism,dotted] (R1) -- node[below = 15pt, right=8pt]{\small $d_{12}$} (D2);
\node(acL1)[outer sep=0pt,inner sep=0pt,node distance=0em,strictly right of=R1]{
\tikz[baseline,draw=black,fill=lightgray]{\filldraw (0,0) -- node[above,pos=0.6,overlay,outer sep=1ex](acL1){\small $\ac_{L_1}$} (0.3,0.12) -- (0.3,-0.12) -- (0,0);}};
\node(acL2)[outer sep=0pt,inner sep=0pt,node distance=1em,left of=L2]{\tikz[baseline,draw=black,fill=lightgray]{\filldraw (0,0) -- node[above,pos=0.6,overlay,outer sep=1ex](acL2){\small $\ac_{L_2}$} (-0.3,0.12) -- (-0.3,-0.12) -- (0,0);}};}
\end{tikzpicture}
\caption{initial parallel independent transformation pair $tp_{L_1 + L_2}$ for parallel independent AC-disregarding transformation pair $tp_G$}
\label{fig:embedding-pi}
\end{figure}

Because of uniqueness of initial transformation pairs up to isomorphism, it thus follows that for each pair of \emph{plain rules} $\tuple{p_1,p_2}$ there is a \emph{unique initial parallel independent transformation pair} $tp_{L_1 +L_2}: R_1 + L_2 \ldder_{p_1, i_1} L_1 + L_2 \dder_{p_2, i_2} L_1 + R_2$.  Note that this is different from the situation for conflicts for plain rules, where initial transformation pairs may differ from conflict to conflict. Consequently, in general for a pair of rules we can have different initial conflicts, but there exists always a unique initial parallel independent transformation pair.

\begin{definition}[initial parallel independent transformation pair]\label{def:core-parallel}
A pair of parallel independent plain transformations $tp: H_1\ldder_{p_1,m_1}G\dder_{p_2,m_2} H_2$ is an \emph{initial parallel independent transformation pair} if it is isomorphic to the transformation pair $tp_{L_1 +L_2}: R_1 + L_2 \ldder_{p_1, i_1} L_1 + L_2 \dder_{p_2, i_2} L_1 + R_2$.
\end{definition}

The one-element set consisting of the initial parallel independent transformation pair for a given pair of rules is \emph{complete w.r.t. parallel independence}. 

\begin{theorem}[completeness of initial parallel independent transformation pairs]
\label{thm:completeness-core-parallel}
The set consisting of the initial parallel independent transformation pair $tp_{L_1 +L_2}: R_1 + L_2 \ldder_{p_1, i_1} L_1 + L_2 \dder_{p_2, i_2} L_1 + R_2$ for a pair of plain rules $\tuple{p_1,p_2}$ is \emph{complete} w.r.t. parallel independence. 
\end{theorem}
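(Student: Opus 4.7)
The plan is to obtain \autoref{thm:completeness-core-parallel} as an essentially direct consequence of \autoref{lem:initial-parallel-independent} together with the definition of initial transformation pair (\autoref{def:ini_pair}) and the definition of completeness of transformation pairs (\autoref{def:completeness-concrete}). Concretely, I would unfold what completeness w.r.t.\ parallel independence demands: for every parallel independent pair $tp: H_1\ldder_{p_1,m_1}G\dder_{p_2,m_2} H_2$, there must be a member of the candidate set together with an extension diagram mapping it into $tp$. Since the candidate set has only one element, namely $tp_{L_1 + L_2}: R_1 + L_2 \ldder_{p_1, i_1} L_1 + L_2 \dder_{p_2, i_2} L_1 + R_2$, the task reduces to producing an extension diagram from $tp_{L_1+L_2}$ into every parallel independent pair $tp$.

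First I would fix an arbitrary parallel independent $tp$ with matches $m_1 \colon L_1 \to G$ and $m_2 \colon L_2 \to G$. By the universal property of the binary coproduct $L_1 + L_2$ (which exists by \autoref{ass:adhesive}), the cospan $(m_1, m_2)$ induces a unique mediating morphism $m \colon L_1 + L_2 \to G$ satisfying $m \circ i_1 = m_1$ and $m \circ i_2 = m_2$. This $m$ is precisely the extension morphism: by \autoref{lem:initial-parallel-independent} (applied to $tp$), $tp_{L_1 + L_2}$ is the initial transformation pair for $tp$, which by \autoref{def:ini_pair} means that $tp_{L_1+L_2}$ embeds into $tp$ along extension diagrams via some extension morphism. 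Inspecting the proof idea of \autoref{lem:initial-parallel-independent}, this extension morphism is exactly the mediating $m$ just constructed. Hence the required extension diagram exists, and completeness w.r.t.\ parallel independence follows.

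The argument therefore boils down to citing \autoref{lem:initial-parallel-independent} and unpacking the definition of initial transformation pair, so no additional categorical construction is needed here. The one point worth making explicit is that the extension morphism $m$ is in general not required to lie in $\M$ (it is merely the coproduct mediator), which is why the statement asserts ordinary completeness rather than $\M$-completeness; this matches the weaker variant in \autoref{def:completeness-concrete} and is all that is needed. The main (and essentially only) conceptual obstacle is thus verifying that the embedding of $tp_{L_1+L_2}$ into $tp$ coming from initiality is indeed witnessed by the coproduct mediator; this is already discharged in the proof of \autoref{lem:initial-parallel-independent}, so the current theorem requires no further work beyond invoking it.
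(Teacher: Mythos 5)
Your proposal is correct and follows exactly the paper's route: the paper likewise derives the theorem directly from \autoref{lem:initial-parallel-independent} (which already exhibits the coproduct mediating morphism as the extension morphism embedding $tp_{L_1+L_2}$ into any parallel independent pair) together with \autoref{def:core-parallel}. Your additional remarks about the mediator not needing to lie in $\M$ are accurate but not required beyond what the cited lemma already provides.
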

\begin{proof}
This follows directly from \autoref{lem:initial-parallel-independent} and \autoref{def:core-parallel}. \qed
\end{proof}

\section{Initial Conflicts}\label{sec:cond-conflict}


We start with showing why it is not possible to straightforwardly generalize the idea of initial conflicts from plain rules to rules with ACs. On the one hand, \emph{conflict inheritance} does not hold any more such that not each transformation pair that can be embedded into a conflicting one is conflicting again, which was the basis for being able to show completeness of initial conflicts for plain rules. Actually, the reverse of inheritance, what we call conflict co-inheritance, does not hold either, i.e., not each transformation pair that can embed a conflicting one is conflicting again (cf.~\autoref{subsec:conflict-inheritance}). 
Moreover, it is \emph{impossible} in general to find a \emph{finite and complete subset of finite conflicts} for rules with ACs (cf.~\autoref{subsec:infinity}) as illustrated for the category of {\bf Graphs}. Finiteness is a basic prerequisite however to be able to practically compute a complete (i.e. representative) set of conflicts statically. This motivates again the need for having \emph{symbolic transformation pairs} as introduced in \autoref{def:symbolic-pair}, allowing us to define \emph{initial conflicts} (cf.~\autoref{subsec:def-initial-cond-conflict}) as a  set of specific symbolic transformation pairs, being complete w.r.t. parallel dependence indeed (as shown in \autoref{subsec:completeness-confluence}). This set as well as its elements are also finite, for example, in the case of graphs (and provided that the rules are finite).  

\subsection{Conflict Inheritance}\label{subsec:conflict-inheritance}
Conflicts are in general not inherited (as opposed to the case of plain rules~\cite{LambersBO0T18}) such that not each (initial) transformation pair that can be embedded into a conflicting one will be conflicting again. This may happen in particular for AC conflicts. Use-delete (resp. delete-use) conflicts for rules with ACs are still inherited. 

\begin{lemma}[Use-delete (delete-use) conflict inheritance]\label{lem:inheritance}
Given a pair of direct transformations $tp$ in use-delete (resp. delete-use) conflict and another pair of direct transformations $tp'$ that can be embedded into $tp$ via extension morphism $f$  and corresponding extension diagrams, then $tp'$ is also in use-delete (resp. delete-use) conflict.  
\end{lemma}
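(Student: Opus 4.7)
The plan is to prove the two cases (use-delete and delete-use) symmetrically, so I would present the use-delete case in detail and note that the delete-use case follows by swapping the roles of the two rules. I would argue by contrapositive: assuming $tp'$ is \emph{not} in use-delete conflict, I would construct a witness morphism $d_{12}$ for $tp$, contradicting the hypothesis that $tp$ is in use-delete conflict.

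First I would unfold the data supplied by the embedding. The extension from $tp'$ to $tp$ via $f\colon G' \to G$ yields, in particular, two DPO extension squares for $\prule_2$, so that the pushout complements $D_2'$ and $D_2$ are linked by a canonical mediating morphism $d_2\colon D_2' \to D_2$ satisfying $k_2 \circ d_2 = f \circ k_2'$ (and analogously $c_2 \circ d_2 = g_2 \circ c_2'$, which will not be needed here). The extension diagram for $\prule_1$ moreover gives $m_1 = f \circ m_1'$. These are the only pieces of the embedding I will use.

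Now assume, towards a contradiction, that $tp'$ is not in use-delete conflict. Then, by \autoref{def:parallelIndependence}, there exists a morphism $d_{12}'\colon L_1 \to D_2'$ with $k_2' \circ d_{12}' = m_1'$. Define $d_{12} \mathrel{:=} d_2 \circ d_{12}'\colon L_1 \to D_2$. A short diagram chase then gives
\[
k_2 \circ d_{12} \;=\; k_2 \circ d_2 \circ d_{12}' \;=\; f \circ k_2' \circ d_{12}' \;=\; f \circ m_1' \;=\; m_1,
\]
so $d_{12}$ is the commuting lift whose non-existence is precisely what it means for $tp$ to be in use-delete conflict. This contradicts the hypothesis and establishes the use-delete case. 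The delete-use case follows by the same argument with $d_{21}$, $d_1\colon D_1' \to D_1$ and $k_1$ in the roles of $d_{12}$, $d_2$ and $k_2$.

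Note that ACs play no role in this argument: use-delete (resp.\ delete-use) conflict is defined purely by the non-existence of the morphism $d_{12}$ (resp.\ $d_{21}$), irrespective of whether $c_2 \circ d_{12}$ would satisfy $\ac_{L_1}$. This is exactly why use-delete/delete-use conflicts remain inherited, in contrast to AC-produce/produce-AC conflicts treated afterwards in the paper. The only (minor) obstacle in the proof is to be clear about the orientation of the extension diagram (the embedding goes from the smaller $tp'$ into the larger $tp$, hence $f\colon G' \to G$ and $d_2\colon D_2' \to D_2$), since reversing it would make the composition $d_2 \circ d_{12}'$ ill-typed.
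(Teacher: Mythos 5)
Your proof is correct and takes essentially the same route as the paper, which simply defers to the conflict inheritance lemma for plain rules in the cited work; the argument there is exactly your contrapositive composition $d_{12} := d_2 \circ d_{12}'$ using the mediating morphism between the pushout complements supplied by the extension diagram. Your closing observation that ACs are irrelevant here because use-delete/delete-use conflicts are defined purely by the non-existence of the commuting lift is precisely why the plain-rule argument carries over verbatim.
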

\begin{proof}
The proof is completely analogous to the one for the conflict inheritance lemma for plain rules in~\cite{LambersBO0T18} and use-delete (resp. delete-use) conflicts.  \qed 
\end{proof}

\begin{example}[Neither inheritance nor co-inheritance for AC conflicts]\label{example:inheritance}
Consider rules 
$p_1: \node{} \leftarrow \node{} \rightarrow \node{}\outgoing\node{}$
 (with AC $\true$), producing an outgoing edge with a node, and 
$p_2: \node{} \leftarrow \node{} \rightarrow \node{} \ \node{}$
 with NAC 
$\neg \exists n: \node{} \rightarrow \node{} \ \node{} \ \node{}$ , producing a node only if two other nodes do not exist already. Consider graph 
$G= \node{} \ \node{}$, holding two nodes. Applying both rules to $G$ (with the matches sharing one node in $G$) we obtain a produce-AC conflict since the first rule creates a third node, forbidden by the second rule. Now both rules can be applied similarly to the shared node in the subgraph 
$G' = \node{}$ of $G$ obtaining parallel independent transformations, illustrating that AC-conflicts are not inherited.

Assume that $p_2$ would have the more complex AC 
$(\neg \exists n: \node{} \rightarrow \node{} \ \node{} \ \node{}) \vee (\exists p: \node{} \rightarrow \node{} \ \node{} \ \node{} \ \node{})$, then the transformation pair arising from their application to $G$ sharing one node with their matches is still produce-AC conflicting. Now the application of both rules to the extended graph 
$G'' = \node{} \ \node{} \ \node{} \ \node{}$ (sharing with the extended matches the same node as in $G$) would satisfy the AC and would be moreover parallel independent, illustrating that AC-conflicts are not co-inherited. 
\end{example}

\subsection{Complete Subset of Conflicts}\label{subsec:infinity}
We show that in $\M$-adhesive categories it is in general impossible to find a finite and complete subset of finite conflicts for rules with ACs as illustrated for the category {\bf Graphs} (under the assumption\footnote{Without this assumption even in the case of plain rules the set of critical pairs would already be infinite.} that graph transformation rules are finite). 

\begin{theorem}\label{thm:infinite-set-complete}
Given finite rules $\prule_1 = \tuple{p_1,\ac_{L_1}}$ and $\prule_2 = \tuple{p_2,\ac_{L_2}}$ for the $\M$-adhesive category {\bf Graphs}, in general, there is no finite set of finite  transformation pairs $\SC$ for $\prule_1$ and $\prule_2$ that is complete w.r.t. parallel dependence.
\end{theorem}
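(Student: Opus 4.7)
My plan is to establish the theorem by constructing an explicit counterexample in the category of graphs: specific finite rules $\prule_1, \prule_2$ with ACs together with an infinite family of parallel dependent transformation pairs that cannot collectively be covered by any finite set $\SC$ via extension diagrams. The rules I propose are $\prule_1 = (\emptyset \leftarrow \emptyset \hookrightarrow \{\bullet\}, \ctrue)$, which simply creates a single isolated node, and $\prule_2$ the identity rule on a single node $n$, equipped with an AC $\ac_{L_2}$ expressing ``every node of the graph has at least one outgoing edge''. Although $\ac_{L_2}$ is a universal statement, it is expressible as a finite nested AC over $L_2$ in the sense of \autoref{def:condition}, via an outer negation combined with a disjunction of PACs covering the possible targets of the out-edge (self-loop, edge to the match node $n$, or edge to a genuinely new node), together with a separate conjunct ensuring $n$ itself has an out-edge.

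For each prime $p \ge 2$, let $G_p$ be the directed $p$-cycle $C_p$ with $\prule_2$ matching an arbitrary fixed node. I will verify that the transformation pair at $G_p$ is produce-AC conflicting: $\ac_{L_2}$ holds at $G_p$ because each node of the cycle has its successor as an out-neighbour, while applying $\prule_1$ creates a fresh isolated node, so the resulting graph violates $\ac_{L_2}$. Hence $\{G_p\}_{p \text{ prime}}$ yields an infinite family of parallel dependent pairs for $\tuple{\prule_1,\prule_2}$.

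The central step is to bound, for any fixed transformation pair with ambient object $K$ in a hypothetical $\SC$, the number of primes $p$ for which $K$ can extend to $G_p$. Because $K$ must itself satisfy $\ac_{L_2}$ at its $\prule_2$-match (otherwise the transformation pair does not exist), $K$ has no out-isolated node; since $K$ is finite, following out-edges forces the existence of at least one directed cycle in $K$ of some length $l \ge 1$. For any graph morphism $m \colon K \to C_p$ realising an extension diagram as in \autoref{fig:completeness}, the image of such a cycle must be a closed walk of length $l$ in $C_p$, and a direct calculation iterating the ``$+1 \bmod p$'' structure of $C_p$ shows this forces $p \mid l$. Therefore $K$ extends to $G_p$ only for primes $p$ among the finitely many prime divisors of $l$. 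Any finite $\SC$ thus covers only finitely many primes, whereas $\{G_p\}_{p \text{ prime}}$ is infinite, contradicting completeness w.r.t. parallel dependence.

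The step I expect to be most delicate is the formal encoding of the universal AC ``every node has an out-edge'' as a finite nested AC: since the existential quantifiers of nested ACs range only over elements freshly added by the extension morphism, the encoding demands an explicit case split on the target of the out-edge (to $n$, to itself, or to a new element) and a separate treatment of the match node $n$. Once this encoding is spelled out, the divisibility argument for morphisms of directed cycles is standard and the cardinality contradiction follows immediately, also noting that non-injectivity of the extension morphism does not affect the cycle-length computation.
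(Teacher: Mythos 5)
Your proof is correct, but it takes a genuinely different route from the paper's. Both arguments share the same germ: one rule adds an isolated node, the other carries an AC that the fresh isolated node violates, so every admissible host graph yields an AC-conflict. The paper, however, keeps the application condition $c$ \emph{generic} (an arbitrary property of graphs without isolated nodes, expressible as a nested condition over $\emptyset$) and argues abstractly: a finite complete set $\SC$ would force $c$ to be equivalent to a finite disjunction of positive existential atoms $\exists(\emptyset \rightarrow G'',\true)$ obtained by epi-mono factorisation of the extension morphisms, contradicting the known expressiveness of nested conditions (equivalently, of first-order logic on graphs). You instead fix a \emph{concrete} condition (``every node has an outgoing edge''), exhibit the explicit infinite family of prime directed cycles $C_p$, and close the argument with the elementary observation that a homomorphism from a graph containing a directed $l$-cycle into $C_p$ forces $p \mid l$ --- so each finite $K$ in $\SC$, which must itself satisfy the AC and hence contain a cycle, covers only finitely many primes. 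Your approach is more self-contained: it does not lean on the meta-theorem that FOL exceeds finite positive-existential disjunctions, and in fact your example is precisely an explicit witness for the expressiveness gap that the paper merely cites. The paper's approach buys generality (it shows the obstruction arises for essentially any nontrivial $c$), at the price of outsourcing the final contradiction to a logical fact stated without proof. The two delicate points in your version --- that the universal AC is encodable as a finite nested condition over the one-node left-hand side (requiring the case split on edge targets and the separate conjunct for the match node), and that the members of $\SC$ are genuine, AC-satisfying transformation pairs so that their source graphs really do contain directed cycles --- are both real and both adequately addressed in your sketch; note that the latter is exactly where the theorem's restriction to (non-AC-disregarding) transformation pairs is used, and is also why the paper is then driven to introduce symbolic transformation pairs.
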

\begin{proof}[idea]
The idea of the proof is that if such a finite set $\S$ always exists, we can derive that each first-order formula is  equivalent to a finite disjunction of atomic formulas, which is a contradiction. To show this, we define rules
$\prule_1 = \langle \emptyset \leftarrow \emptyset \rightarrow \emptyset, c \rangle$ and 
$\prule_2 = \langle\emptyset \leftarrow \emptyset \rightarrow 1_N, \true \rangle$, with $1_N$ the graph consisting just of an isolated node and $c$ some arbitrary property (expressible using ACs over the empty graph) about graphs without isolated nodes. Assuming that 
$\SC_0 = \{G \Longleftarrow_{\prule_1} G \Longrightarrow_{\prule_2} G\oplus 1_N \mid G\models c \}$ is  the set of transformation pairs associated to these rules, then we can show that $c$ is equivalent to a finite disjunction of existential atoms of the form $ \exists(\emptyset \rightarrow H, \true)$
\qed
\end{proof}

\subsection{Initial Conflicts}\label{subsec:def-initial-cond-conflict}
We generalize the notion of \emph{initial conflicts} for plain rules to rules with ACs.  In particular, we introduce them as special symbolic transformation pairs.  They are \emph{conflict-inducing} meaning that there needs to exist an unfolding of the symbolic transformation pair into a concrete conflicting transformation pair. Moreover, their AC-disregarding transformation pair needs to be an initial conflict or initial parallel independent transformation pair.  We also show formally the \emph{relationship between initial conflicts and critical pairs} as reintroduced in \autoref{subsec:critical-pairs}. In particular, we demonstrate that initial conflicts represent a proper subset of critical pairs again. 

\begin{definition}[unfolding of symbolic transformation pair]
Given a symbolic transformation pair $stp_K: \tuple{tp_K,\ac_K,\ac^*_K}$ for rule pair $\tuple{\prule_1,\prule_2}$, then its unfolding $\mathcal{U}(stp_K)$ consists of all transformation pairs $H_1\ldder_{\prule_1,m_1}G\dder_{\prule_2,m_2} H_2$ representing the lower row of the extension diagrams via some extension morphism $m:K\rightarrow G$ as shown in \autoref{fig:completeness} (with AC-disregarding transformation pair $tp_K$ in the upper row).  
\end{definition}

\begin{remark}[non-empty unfolding]
Note that the unfolding of a symbolic transformation pair is not empty if there exists an extension morphism $m:K\rightarrow G$  satisfying the gluing conditions as well as $\ac_K$ for the  derived spans (as can be followed directly from the Embedding Theorem~\cite{EhrigHLOG10,EhrigGHLO12} for rules with ACs, since $m$ would be boundary as well as AC-consistent).  
\end{remark}

\begin{definition}[conflict-inducing symbolic transformation pair]\label{def:symbolic-pair-conflict} Given rules $\prule_1=\tuple{p_1,\ac_{L_1}}$ and $\prule_2=\tuple{p_2,\ac_{L_2}}$, a symbolic transformation pair $stp_K: \tuple{tp_K,\ac_K,\ac^*_K}$ for $\tuple{\prule_1,\prule_2}$ is \emph{conflict-inducing} if there exists a pair of conflicting transformations in its unfolding $\mathcal{U}(stp_K)$. 
\end{definition}

\begin{remark}[conflict-inducing \& unfolding]\label{rem:unfolding} The unfolding of a conflict-inducing symbolic transformation pair may contain parallel independent transformations. Consider rules $\prule_1=\tuple{p_1,\true}$ and $\prule_2=\tuple{p_2,\neg \exists n}$ from \autoref{example:inheritance} and symbolic transformation pair $stp': \tuple{tp_{G'},\ac_{G'},\ac^*_{G'}}$, with $tp_{G'}$ the AC-disregarding transformation pair arising from applying rules $p_1$ and $p_2$ to $G' = \node{}$, $\ac_{G'} = \NE n': \node{} \rightarrow \node{} \ \node{} \ \node{}$, and $\ac^*_{G'} = (\PE p': \node{} \rightarrow \node{} \ \node{}) \vee (\PE p'': \node{} \rightarrow \node{} \ \node{} \ \node{})$.  
Then $stp'$ is a conflict-inducing symbolic transformation pair, since its unfolding includes the parallel dependent transformation pair $tp_{G}$ arising from applying the rules $\prule_1$ and $\prule_2$ to $G = \node{} \ \node{}$. The extension morphism $m: G' \rightarrow G$ fulfills $\ac_{G'}$ and $\ac^*_{G'}$ indeed. However, the transformation pair $tp_{G'}$ satisfies all ACs, belongs to the unfolding $\mathcal{U}(stp)$ accordingly, but is parallel independent (as described in \autoref{example:inheritance} and derivable from the fact that $\ac^*_{G'}$ is not fulfilled for the extension morphism $id_{G'}$). 
\end{remark}

%

An \emph{initial conflict} is a conflict-inducing symbolic transformation pair with its AC-disregarding transformation pair being initial. Note that we say that an AC-disregarding transformation pair is initial if it is initial as plain transformation pair (cf.~\autoref{fig:embedding-pi}). Remember that each symbolic transformation pair is uniquely determined by its underlying AC-disregarding transformation pair. This means that the set of initial conflicts basically consists of a filtered set of plain initial conflicts (those that are conflict-inducing as symbolic transformation pair) together with the initial parallel independent transformation pair (in case it is conflict-inducing as symbolic transformation pair). 

\begin{definition}[initial conflict]\label{def:initial-conflict} Consider an $\mathcal{M}$-adhesive system with initial transformation pairs for conflicts along plain rules. An \emph{initial conflict} for rules $\prule_1=\tuple{p_1,\ac_{L_1}}$ and $\prule_2=\tuple{p_2,\ac_{L_2}}$ is a conflict-inducing symbolic transformation pair $stp_K: \tuple{tp_K,\ac_K,\ac^*_K}$ with the AC-disregarding transformation pair $tp_K$ being initial, i.e. either $tp_K$ is an initial conflict for rules $p_1$ and $p_2$ (in this case $stp_K$ is called a \emph{use-delete/delete-use initial conflict}) or it is the initial parallel independent transformation pair $tp_{L_1+L_2}$ for rules $p_1$ and $p_2$ (in this case $stp_{K} = stp_{L_1+L_2} = \tuple{tp_{L_1+L_2},\ac_{L_1+L_2},\ac^*_{L_1+L_2}}$ is called the \emph{AC initial conflict}).  
\end{definition}


Note that as explained in \autoref{rem:unfolding} the unfolding of a conflict-inducing symbolic transformation pair (and in particular of an AC initial conflict) may entail apart from (at least one) conflicting transformation pair(s) also parallel independent transformation pairs.  All conflicts in the unfolding of an AC initial conflict are AC conflicts, and never use-delete/delete-use conflicts (because otherwise we would get a contradiction using \autoref{lem:inheritance}). 

\begin{example}[initial conflict]\label{example:ic}
Consider again the rules from \autoref{example:inheritance}.  Applying both rules to $L_1 +L_2 = \node{} \ \node{}$ (with disjoint matches) we obtain the AC initial conflict $stp_{K} = stp_{L_1+L_2} = \tuple{tp_{L_1+L_2},\ac_{L_1+L_2},\ac^*_{L_1+L_2}}$. Thereby $\ac_{L_1 + L_2}$ is equivalent to $\neg \exists (\node{1} \ \node{2} \rightarrow \node{1} \ \node{2} \ \node{}) \wedge \neg \exists (\node{1} \ \node{2} \rightarrow \node{1,2} \ \ \ \node{} \ \node{})$, expressing that when during extension both nodes are merged, no two additional nodes, otherwise not one additional node should be given. Moreover, $\ac^*_{L_1+L_2}$ is equivalent to $\exists (\node{1} \ \node{2} \rightarrow \node{1,2} \ \ \ \node{}) \vee \exists(\node{1} \ \node{2} \rightarrow \node{1} \ \node{2})$, expressing that either both nodes are not merged during extension, otherwise one additional node should be present for a conflict to arise.  Both transformation pairs (the conflicting one from $G = \node{} \ \node{}$ as well as the parallel independent one from its subgraph $G' = \node{}$, sharing the merged node in their matches) described in \autoref{example:inheritance} belong to its unfolding.   
\end{example}

Each initial conflict is in particular also a critical pair.

\begin{theorem}[initial conflict is critical pair]\label{thm:ic-is-cp} Consider an $\mathcal{M}$-adhesive system with initial transformation pairs for conflicts along plain rules. Each initial conflict $stp_K: \tuple{tp_K,\ac_K,\ac^*_K}$ is a critical pair. 
\end{theorem}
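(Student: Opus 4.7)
The plan is to verify, for an arbitrary initial conflict $stp_K: \tuple{tp_K, \ac_K, \ac_K^*}$, the two defining conditions of a critical pair from \autoref{def:critical}: that the match pair $(o_1,o_2)$ belongs to $\Epi'$, and that there is a witnessing extension morphism $m \in \M$ with $m \models \ac_K \wedge \ac_K^*$ such that $m \circ o_i$ satisfies the gluing conditions.

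For the joint-epi condition on the matches, I would split according to the two cases of \autoref{def:initial-conflict}. In the use-delete/delete-use case, $tp_K$ is a plain initial conflict, which by the result of \cite{LambersBO0T18} recalled in \autoref{subsec:initial-conflicts} is itself a plain critical pair, and so $(o_1,o_2) \in \Epi'$. In the AC case, $K = L_1 + L_2$ and $(o_1,o_2) = (i_1,i_2)$ are the coproduct injections, which are jointly epimorphic by the universal property of binary coproducts (assumed by \autoref{ass:adhesive}). The gluing conditions on the matches of $tp_K$ are already built into $tp_K$ being a well-defined AC-disregarding transformation pair.

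For the $\M$-extension, I would start from the conflict-inducing property: the unfolding $\mathcal{U}(stp_K)$ contains a parallel-dependent pair $tp_G: H_1 \ldder G \dder H_2$ with an extension morphism $m_0: K \to G$ such that $m_0 \models \ac_K \wedge \ac_K^*$ (this is automatic from how $\ac_K$ and $\ac_K^*$ are defined once $tp_G$ is a valid, parallel-dependent transformation pair with match pair $m_0 \circ o_i$). To upgrade $m_0$ to an $\M$-morphism I would invoke $\M$-completeness of critical pairs (\autoref{thm:completeness}) on $tp_G$, obtaining a critical pair $stp_{K^*}$ with matches $e_i: L_i \to K^*$ in $\Epi'$ and an $\M$-extension $m^*: K^* \to G$. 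Then by initiality of $tp_K$ applied to the embedding of $tp_{K^*}$ into $tp_G$, there is a unique mediator $h: K \to K^*$ with $h \circ o_i = e_i$ and $m^* \circ h = m_0$; I would take $m := h$ with target $K^*$. Satisfaction of $\ac_K$ by $h$ reduces via \autoref{lem:shift} to $e_i \models \ac_{L_i}$, and satisfaction of $\ac_K^*$ follows from the analogous computation using the $\Shift$/$\Left$ description of the conflict-inducing AC in \autoref{def:symbolic-pair}.

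The main obstacle is to show that the mediator $h$ is itself in $\M$ and that the AC-satisfaction transfer is sound for arbitrary nested ACs. For $h \in \M$ in the use-delete/delete-use case, one combines initiality of $tp_K$ with uniqueness (up to iso) of the $\Epi'$-$\M$ pair factorization to force the mediator between two $\Epi'$-factors having the same $L_1, L_2$ as domains to be an isomorphism; in the AC case one uses that $m^* \in \M$ prevents the coproduct mediator $h: L_1 + L_2 \to K^*$ from identifying elements of distinct summands, whence $h \in \M$. The AC-transfer step then requires arguing that positive subconditions satisfied by $m_0 = m^* \circ h$ reflect along $m^* \in \M$ down to $h$, which is delicate because nested positive ACs may in principle gain or lose witnesses along non-$\M$-morphisms; this is where I expect the real technical work of the proof to lie.
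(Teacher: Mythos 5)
Your treatment of the first condition, $(o_1,o_2)\in\Epi'$, matches the paper's (plain initial conflicts have matches in $\Epi'$ by \cite{LambersBO0T18}; the coproduct injections are in $\Epi'$ by the coproduct property together with uniqueness of the $\Epi'$-$\M$ pair factorization). The second half, however, breaks at the point where you discard the unfolding's extension morphism $m_0\colon K\to G$ and instead propose the mediator $h\colon K\to K^*$ into the factorization critical pair as the witness for \autoref{def:critical}. Two things fail. First, $h\models\ac_K$ would require $e_i=h\circ o_i\models\ac_{L_i}$, and the matches of a critical pair do \emph{not} in general satisfy the rule ACs --- that is precisely why critical pairs are only symbolic and carry $\ac_K$ as an obligation on future extensions (a positive subcondition of $\ac_{L_i}$ may only become satisfiable after extending $K^*$ to $G$). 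The step you flag as ``delicate'' --- reflecting $m^*\circ h\models\ac_K$ down to $h\models\ac_K$ along $m^*\in\M$ --- is not merely technical but false: AC satisfaction is neither inherited nor co-inherited along extension morphisms, which is exactly the content of \autoref{example:inheritance}. Second, $h$ is in general not in $\M$: both $(o_1,o_2)$ and $(e_1,e_2)$ lie in $\Epi'$, but $K^*$ typically identifies strictly more of $L_1$ and $L_2$ than the initial conflict $K$ does, so in the category of graphs $h$ is a non-injective surjection; uniqueness of the pair factorization only forces $h$ to be an isomorphism once you already know $h\in\M$, which is what you would need to prove.

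The paper avoids the detour entirely: the witness is the extension morphism $m\colon K\to G$ of a \emph{conflicting} pair $tp_G$ in the unfolding, whose existence is exactly the conflict-inducing property. Then $m\circ o_i$ are the actual matches of the genuine transformation pair $tp_G$, so they satisfy the gluing conditions and $\ac_{L_i}$, whence $m\models\ac_K$ by \autoref{lem:shift}; and $m\models\ac^*_K$ is handled by a case split your sketch omits: for a use-delete/delete-use initial conflict one of $\ac^*_{K,d_{12}},\ac^*_{K,d_{21}}$ is $\cfalse$, so $\ac^*_K$ is simply $\ctrue$, while for the AC initial conflict $tp_G$ cannot be a use-delete/delete-use conflict (by \autoref{lem:inheritance} this would contradict parallel independence of $tp_{L_1+L_2}$), hence it is an AC conflict, and $\Shift$, $\Left$ and uniqueness of the diagonal morphisms give $m\models\ac^*_K$. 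Your instinct that the $\M$-membership of the witness deserves scrutiny is reasonable, but any repair must keep a codomain in which the rule ACs actually hold; as it stands your candidate witness satisfies neither $h\in\M$ nor $h\models\ac_K$.
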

\begin{proof}[idea]
The proof is quite straightforward. In particular, it is routine to show that the matches defining $stp_K: \tuple{tp_K,\ac_K,\ac^*_K}$ are in $\mathcal{E'}$ and that there exists a morphism $m: K \rightarrow G \in \mathcal{M}$ satisfying all further conditions from \autoref{def:critical} guaranteeing that $stp_K$ is conflict-inducing.\qed 
\end{proof}

The reverse direction of \autoref{thm:ic-is-cp} does not hold, i.e. in the case of rules with ACs initial conflicts represent also  a \emph{proper subset} of the set of critical pairs. This proper subset relation holds already in the case of plain rules. Therefore, in the case that $\ac_{L_1}$ and $\ac_{L_2}$ are true, a critical pair $stp_K:\tuple{tp_K,\ac_K,\ac^*_K}$ has $\ac_K$ true and $\ac^*_{K}$ true, since either $\ac^*_{K,d_{12}}$ or $\ac^*_{K,d_{21}}$ needs to be false with $tp_K$ a use-delete/delete-use conflict.  Since such a $tp_K$ is in particular a critical pair for plain rules, this means that we would have as many critical pairs that are no initial conflicts as for the case with plain rules. More generally, critical pairs $stp_K:\tuple{tp_K,\ac_K,\ac^*_K}$ where $tp_K$ represents a use-delete/delete-use conflict (but is not initial yet) are represented by the initial conflict $stp_I:\tuple{tp_I,\ac_I,\ac^*_I}$ with $tp_I$ the unique initial conflict for $tp_K$ as plain transformation pair. Moreover, critical pairs $stp_K:\tuple{tp_K,\ac_K,\ac^*_K}$  where $tp_K$ is parallel independent as plain transformation pair are represented by one initial conflict $stp_{L_1 +L_2}:\tuple{tp_{L_1 +L_2},\ac_{L_1 +L_2},\ac^*_{L_1 +L_2}}$  with $tp_{L_1 +L_2}$ the initial parallel independent transformation pair. 

\begin{example}[initial conflicts: proper subset of critical pairs]\label{example:subset}
Consider again the rules from \autoref{example:inheritance} and their application to $G' = \node{}$. The symbolic transformation pair $stp_{G'}: \tuple{tp_{G'},\ac_{G'},\ac^*_{G'}}$ is a critical pair, but not an initial conflict. In particular, this critical pair is represented by the unique AC initial conflict $stp_{L_1 +L_2}:\tuple{tp_{L_1 +L_2},\ac_{L_1 +L_2},\ac^*_{L_1 +L_2}}$ (which is also a critical pair).   
\end{example}

\subsection{Completeness}
\label{subsec:completeness-confluence}
We show that initial conflicts are complete (not $\mathcal{M}$-complete as in the case of critical pairs, cf.~\autoref{thm:completeness}) w.r.t. parallel dependence as symbolic transformation pairs. 

\begin{theorem}[completeness of initial conflicts]\label{thm:completeness-ic-acs} Consider an $\mathcal{M}$-adhesive system with initial transformation pairs for conflicts along plain rules. The set of \emph{initial conflicts} for a pair of rules $\tuple{\prule_1,\prule_2}$ is \emph{complete} w.r.t. parallel dependence. 
\end{theorem}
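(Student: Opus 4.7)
The plan is to split according to whether the underlying AC-disregarding transformation pair $tp_G : H_1 \Leftarrow_{p_1,m_1} G \Rightarrow_{p_2,m_2} H_2$ of the given parallel dependent $tp$ is itself parallel dependent or parallel independent as a \emph{plain} transformation pair, i.e.\ whether the conflict is (on at least one side) a use-delete/delete-use conflict, or a pure AC conflict on both sides. In both cases the candidate initial conflict is furnished by an existing plain completeness result, and the remaining work is to verify that the resulting extension morphism $m : K \to G$ satisfies $\ac_K \wedge \ac^*_K$ and that the symbolic pair is conflict-inducing.

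\textbf{Case 1 (use-delete / delete-use).} If either commuting morphism $d_{12}$ or $d_{21}$ does not exist, then $tp_G$ is already parallel dependent as a plain pair. By \autoref{thm:completeness-core} there exists an initial (plain) conflict $tp_K : P_1 \Leftarrow_{p_1,o_1} K \Rightarrow_{p_2,o_2} P_2$ and an extension morphism $m : K \to G$ yielding the extension diagrams embedding $tp_K$ into $tp_G$. I take $stp_K = \langle tp_K, \ac_K, \ac^*_K \rangle$ as the candidate use-delete/delete-use initial conflict, which is conflict-inducing because $tp \in \mathcal{U}(stp_K)$ witnesses it.

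\textbf{Case 2 (pure AC conflict).} If both $d_{12}$ and $d_{21}$ exist, then $tp_G$ is parallel independent as a plain pair, while $tp$ itself is parallel dependent only because an extended match violates $\ac_{L_1}$ or $\ac_{L_2}$. By \autoref{thm:completeness-core-parallel} the unique initial parallel independent transformation pair $tp_{L_1+L_2}$ embeds into $tp_G$ via the mediating coproduct morphism $m : L_1 + L_2 \to G$ of $m_1$ and $m_2$. I take $stp_{L_1+L_2} = \langle tp_{L_1+L_2}, \ac_{L_1+L_2}, \ac^*_{L_1+L_2} \rangle$, which is conflict-inducing because $tp$ itself lies in its unfolding.

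\textbf{Verifying $m \models \ac_K \wedge \ac^*_K$.} For the extension AC, from $m \circ o_i = m_i$ and $m_i \models \ac_{L_i}$ ($i = 1,2$), \autoref{lem:shift} gives $m \models \Shift(o_i,\ac_{L_i})$, so $m \models \ac_K$. For the conflict-inducing AC, I unfold the definition $\ac^*_K = \neg(\ac^*_{K,d_{12}} \wedge \ac^*_{K,d_{21}})$: in Case 1, the missing $d_{12}$ or $d_{21}$ forces the corresponding $\ac^*_{K,d_{\cdot}} = \cfalse$, and thus $m \models \ac^*_K$ trivially; in Case 2, parallel dependence of $tp$ means that at least one extended match (say through $d_{12}$) fails its AC, so by \autoref{lem:left} and \autoref{lem:shift} applied to the upper and lower DPO squares, $m$ refutes the corresponding $\ac^*_{K,d_{12}}$ or $\ac^*_{K,d_{21}}$, again giving $m \models \ac^*_K$.

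\textbf{Main obstacle.} The technically delicate part is the verification in Case 2: the definition of $\ac^*_{K,d_{12}}$ involves a composition of $\Shift$ along $c_2 \circ d_{12}$ followed by $\Left$ along the derived rule $p_2^*$, and I must trace precisely how the morphisms $d_{12}, c_2$ in the symbolic pair correspond to the actual co-match into $H_2$, so that refutation of $\ac_{L_1}$ at the extended match in $G$ translates, via two applications of the Shift/Left lemmas, into refutation of $\ac^*_{K,d_{12}}$ at $m$. The rest is routine bookkeeping over the extension diagrams and the standard construction of initial (parallel independent) pairs. \qed
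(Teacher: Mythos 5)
Your proposal follows essentially the same route as the paper's proof: the same case split on whether the plain underlying pair is a use-delete/delete-use conflict or a pure AC conflict, invoking Theorem~\ref{thm:completeness-core} in the first case and the initial parallel independent pair of Lemma~\ref{lem:initial-parallel-independent} in the second, and then verifying $m \models \ac_K \wedge \ac^*_K$ via Lemmas~\ref{lem:shift} and~\ref{lem:left} exactly as the paper does. The only cosmetic difference is that in Case~1 you should make explicit that the missing diagonal forcing $\ac^*_{K,d_{12}}$ or $\ac^*_{K,d_{21}}$ to be $\cfalse$ is the one at the level of $tp_K$ (which holds because a plain initial conflict is itself parallel dependent), not the one at the level of $tp_G$; with that reading your argument matches the paper's.
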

\begin{proof}[idea]
Let $tp_G$ be a parallel dependent pair of transformations for $\tuple{\prule_1,\prule_2}$. The theorem is a direct consequence of: a) if $tp_G$ is a use-delete/delete-use conflict, according to Thm. \ref{thm:completeness-core}, there is an initial conflict $tp_K$ that can be embedded into $tp_G$; and b) if $tp_G$ is an AC-conflict, according to Lemma \ref{lem:initial-parallel-independent}, $tp_{L_1+L_2}$ can be embedded into $tp_G$, where $L_1, L_2$ are the left hand sides of the rules $\prule_1,\prule_2$, respectively. \qed
\end{proof}

\begin{remark}[uniqueness of initial conflicts]
It holds again that for each conflict a \emph{unique} (up-to-isomorphism) initial conflict exists representing it, since this property is inherited from the one for plain rules~\cite{LambersBO0T18} and the fact that the initial parallel independent pair of transformations is unique w.r.t. a given rule pair.  
\end{remark}


Initial conflicts are also minimally complete, i.e. we are able to generalize the corresponding result for plain rules (cf.~\autoref{cor:minimal}) to rules with ACs.

\begin{corollary}[minimally complete]\label{cor:minimal-AC}
Consider an $\mathcal{M}$-adhesive system with initial transformation pairs for conflicts via plain rules. The set of initial conflicts $\mathcal{S}$ (up-to-isomorphism) for rules $\tuple{\prule_1,\prule_2}$ is \emph{minimally complete} w.r.t. parallel dependence, i.e. there does not exist any smaller set $\mathcal{S'}$ of symbolic transformation pairs for $\tuple{\prule_1,\prule_2}$ being complete w.r.t. parallel dependence. 
\end{corollary}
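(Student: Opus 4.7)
The plan is a proof by contradiction using a pigeonhole argument, generalizing the approach behind Corollary~\ref{cor:minimal} for plain rules. Suppose $\mathcal{S}'$ is a complete set of symbolic transformation pairs with $|\mathcal{S}'|<|\mathcal{S}|$ (cardinalities counted up to isomorphism). Since each initial conflict is conflict-inducing (Def.~\ref{def:initial-conflict}), for each $stp^I\in\mathcal{S}$ one can pick a conflict $tp^I_G$ in its unfolding, and by completeness of $\mathcal{S}'$ a symbolic transformation pair $\phi(stp^I)\in\mathcal{S}'$ whose unfolding contains $tp^I_G$. The pigeonhole principle then yields two initial conflicts $stp^{I_1}\not\cong stp^{I_2}$ with $\phi(stp^{I_1})=\phi(stp^{I_2})=stp'$; I aim to contradict this by showing $stp^{I_1}\cong stp^{I_2}$.

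Write $tp_{K'},tp_{K^{I_1}},tp_{K^{I_2}}$ for the underlying AC-disregarding pairs of $stp',stp^{I_1},stp^{I_2}$. By the covering assumption, $tp_{K'}$ embeds into both $tp^{I_1}_G$ and $tp^{I_2}_G$. By Def.~\ref{def:initial-conflict}, each $tp_{K^{I_j}}$ is either a plain initial conflict (when $stp^{I_j}$ is use-delete/delete-use) or $tp_{L_1+L_2}$ (when $stp^{I_j}$ is the AC initial conflict), and in either case it is the unique-up-to-iso initial plain transformation pair for $tp^{I_j}_G$ (using Lemma~\ref{lem:initial-parallel-independent} in the AC case). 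By initiality applied to the embedding $tp_{K'}\hookrightarrow tp^{I_j}_G$ it follows that $tp_{K^{I_j}}$ embeds into $tp_{K'}$ for $j=1,2$.

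Lemma~\ref{lem:inheritance}, together with the corresponding symmetric inheritance for use-delete/delete-use conflicts (which holds because these conflict types are structural and do not involve the ACs, as already exploited in the plain case), rules out the mixed situation where one $stp^{I_j}$ is use-delete/delete-use and the other is the AC initial conflict: the plain-conflict underlying pair $tp_{K^{I_j}}$ would force $tp_{K'}$, and hence also its extension $tp^{I_{3-j}}_G$, to be a plain conflict, contradicting the fact that an AC conflict has a parallel-independent underlying plain pair. Since the AC initial conflict is unique up to iso, the distinctness $stp^{I_1}\not\cong stp^{I_2}$ forces both to be use-delete/delete-use initial conflicts, so $tp_{K'}$ is a plain conflict and both $tp_{K^{I_j}}$ are initial plain transformation pairs of the same pair $tp_{K'}$. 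Uniqueness of the initial plain transformation pair (cf.~\autoref{subsec:initial-conflicts}) then gives $tp_{K^{I_1}}\cong tp_{K^{I_2}}$, and because a symbolic transformation pair is uniquely determined by its underlying AC-disregarding pair (up to iso and AC equivalence), $stp^{I_1}\cong stp^{I_2}$, the desired contradiction.

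The main obstacle I expect is verifying that each $tp_{K^{I_j}}$ is genuinely \emph{initial} for $tp_{K'}$, not merely embedded into it. This requires pushing the uniqueness clause of Def.~\ref{def:ini_pair} through the composition $tp_{K^{I_j}}\hookrightarrow tp_{K'}\hookrightarrow tp^{I_j}_G$ of extension diagrams, which is immediate if the second extension morphism lies in $\M$ but needs extra care in general. The route I would take is to exploit that $tp_{K^{I_j}}$, being an initial conflict, is initial for itself, and then collapse mutual embeddings between $tp_{K^{I_j}}$ and the initial plain transformation pair of $tp_{K'}$ to an isomorphism by applying the initiality uniqueness clause on both sides, analogously to the corresponding step in the proof of Corollary~\ref{cor:minimal}.
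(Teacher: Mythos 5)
Your proof is correct in substance, but it takes a genuinely different route from the paper's. The paper \emph{reduces} the statement to \autoref{cor:minimal}: from the hypothetical smaller complete set $\mathcal{S}'$ it extracts the set $\mathcal{S}'^{pl}$ of underlying plain transformation pairs, adjoins the set $\mathcal{K}$ of plain initial conflicts whose symbolic counterparts are not conflict-inducing, shows $\mathcal{S}'^{pl}\cup\mathcal{K}$ is complete for the plain rules, and then derives a contradiction with plain minimal completeness via a three-way case split on the cardinality of $\mathcal{S}'$ versus that of the use-delete/delete-use part of $\mathcal{S}$. You instead \emph{inline} the pigeonhole-plus-uniqueness argument of \autoref{cor:minimal} directly at the symbolic level, collapsing two distinct initial conflicts covered by the same element of $\mathcal{S}'$ to an isomorphism via initiality of their underlying plain pairs. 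Your route is more self-contained and avoids the $\mathcal{K}$-bookkeeping, at the price of having to re-establish the initiality machinery for the intermediate pair $tp_{K'}$; the paper's route reuses \autoref{cor:minimal} as a black box. Two points in your argument deserve explicit justification rather than assertion: (i) the ``symmetric inheritance'' you invoke is the \emph{co-inheritance} of use-delete/delete-use conflicts (a structural conflict in the embedded pair forces one in the embedding pair), which is \emph{not} \autoref{lem:inheritance} but its converse; it does hold because the square relating $D_2'$ and $D_2$ in an extension diagram is a pushout along an $\mathcal{M}$-morphism and hence a pullback, so a diagonal $d_{12}$ at the larger context restricts to one at the smaller context; and (ii) the step showing that an initial conflict, being initial for itself and embedded into $tp^{I_j}_G$, is initial for $tp^{I_j}_G$ and for $tp_{K'}$ — you correctly identify this as the delicate point and your mutual-embedding collapse handles it, at essentially the same level of detail the paper itself uses in the proof of \autoref{cor:minimal}.
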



The Local Confluence Theorem~\cite{EhrigHLOG10,EhrigGHLO12} for rules with ACs\footnote{On top of strict confluence as in the case of plain rules, also so-called AC-compatibility is required.} still holds in case we substitute the set of critical pairs by initial conflicts, and moreover requiring initial pushouts.  The proof runs completely analogously. The only difference is that for this proof, we need initial pushouts over general morphisms whereas in the proof in~\cite{EhrigHLOG10,EhrigGHLO12} initial pushouts over $\mathcal{M}$-morphisms are sufficient. 

\section{Unfoldings of Initial Conflicts}\label{sec:unfolding}
We show a \emph{sufficient condition} for being able to unfold  initial conflicts into a \emph{complete set of conflicts} that is \emph{finite} if the set of initial conflicts is finite (cf.~\autoref{subsec:unfolding}). We demonstrate moreover that this sufficient condition is fulfilled for the special case of having merely \emph{NACs} as rule application conditions (cf.~\autoref{subsec:nac}). Finally, we show that in this case we obtain in particular specific critical pairs for rules with negative application conditions (NACs) as introduced in \cite{LambersEO06} again. In this sense we show explicitly that initial conflicts as introduced in this paper represent a conservative extension of the critical pair theory for rules with NACs.

\subsection{Finite and Complete Unfolding}\label{subsec:unfolding}
We introduce so-called \emph{regular initial conflicts} leading  to $\mathcal{M}$-complete subsets of conflicts by unfolding them in some particular way (cf. \emph{disjunctive unfolding} in \autoref{def:finite-unfolding}). The idea is that the extension and conflict-inducing AC ($\ac_K$ and $\ac^*_K$, respectively) of such a regular initial conflict $stp_K:\tuple{tp_K,\ac_K,\ac^*_K}$ have a specific form that is amenable to finding $\mathcal{M}$-complete unfoldings.  We expect the condition $\ac_K \wedge \ac^*_K$ to consist of a \emph{disjunction of positive literals}  (conditions of the form $\exists (a_i:K \rightarrow C_i, c_i)$) with a so-called \emph{negative remainder} (i.e. a condition $c_i = \wedge_{j \in J} \neg \exists (b_j: C_i \rightarrow C_j,d_j)$).  Intuitively, this means that there is a finite number of possibilities to unfold the symbolic conflict into a concrete conflict by adding some specific positive context (expressed by the morphism $a_i$). The negative remainder $c_i$ ensures that by adding this positive context to the context $K$ of the symbolic transformation pair within the initial conflict, we indeed find a concrete conflict when not extending further at all. Moreover, it expresses under which condition the corresponding concrete representative conflict leads to further conflicts by extension. Finally, the subsets of $\M$-complete conflicts built using the disjunctive unfolding can shown to be \emph{finite} if the set of initial conflicts it is derived from is finite.
  
\begin{definition}[regular initial conflict, disjunctive unfolding]\label{def:finite-unfolding} Consider an $\mathcal{M}$-adhesive system with initial transformation pairs for conflicts along plain rules. 
Given an initial conflict $stp_K:\tuple{tp_K,\ac_K,\ac^*_K}$ for  rules $\tuple{\prule_1,\prule_2}$, then we say that it is \emph{regular} if $\ac_K \wedge \ac^*_K$ is equivalent to a condition $\vee_{i \in I} \exists (a_i:K \rightarrow C_i, c_i)$ with $c_i = \wedge_{j \in J} \neg \exists (b_j: C_i \rightarrow C_j,d_j)$ a condition on $C_i$, $b_j$ non-isomorphic and $I$ some non-empty index set. Given a regular initial conflict $stp_K:\tuple{tp_K,\ac_K,\ac^*_K}$, then  $\mathcal{U^D}(stp_K) = \cup_{i \in I}\{tp_{C_i}: D_{1,i}\ldder_{\prule_1,a_i \circ o_1} C_i \dder_{\prule_2,a_i \circ o_2} D_{2,i}\}$ is the \emph{disjunctive unfolding} of $stp_K$.
\end{definition}

\begin{remark}[disjunctive unfolding]
The disjunctive unfolding of a regular conflict is non-empty, but might consist of less elements than literals in the disjunction $\vee_{i \in I} \exists (a_i:K \rightarrow C_i, c_i)$.  It might be the case that some of the morphisms $a_i$ do not satisfy the gluing condition of the derived spans. If this is the case, then also every extension morphism starting from there will not satisfy the gluing condition such that we can safely ignore these cases from the disjunctive unfolding.
\end{remark}

\begin{theorem}[finite and complete unfolding]\label{thm:complete-unfolding} Consider an $\mathcal{M}$-adhesive system with initial transformation pairs for conflicts along plain rules. Given a rule pair $\tuple{\prule_1,\prule_2}$ with set $\mathcal{S}$ of initial conflicts such that each initial conflict $stp$ in $S$ is regular, then $\cup_{stp \in \mathcal{S}}\mathcal{U^D}(stp)$ is $\mathcal{M}$-complete w.r.t. parallel dependence. Moreover, $\cup_{stp \in \mathcal{S}}~\mathcal{U^D}(stp)$ is finite if $\mathcal{S}$ is finite. 
\end{theorem}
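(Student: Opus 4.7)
The theorem has two claims that I would handle separately: $\mathcal{M}$-completeness and finiteness. The plan for $\mathcal{M}$-completeness is to start from an arbitrary parallel dependent transformation pair, apply \autoref{thm:completeness-ic-acs} to embed some initial conflict into it, factor the resulting extension morphism using regularity, and finally convert the factorization into the two required extension diagrams via pushout decomposition. Finiteness will then follow from the fact that all ACs in our framework are built from finite conjunctions (\autoref{def:condition}).

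More concretely, let $tp_G: H_1 \ldder_{\prule_1,m_1} G \dder_{\prule_2,m_2} H_2$ be an arbitrary parallel dependent pair. By \autoref{thm:completeness-ic-acs}, there exist an initial conflict $stp_K: \tuple{tp_K,\ac_K,\ac^*_K} \in \mathcal{S}$ and an extension morphism $m: K \to G$ with $m \models \ac_K \wedge \ac^*_K$, together with the two vertical extension diagrams embedding $tp_K$ into $tp_G$. By regularity of $stp_K$, $\ac_K \wedge \ac^*_K$ is equivalent to $\vee_{i \in I} \exists(a_i: K \to C_i, c_i)$; by the semantics of $\exists$ in \autoref{def:condition}, $m \models \ac_K \wedge \ac^*_K$ yields some $i^* \in I$ and an $\mathcal{M}$-morphism $q: C_{i^*} \to G$ with $q \circ a_{i^*} = m$. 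My candidate witness for $\mathcal{M}$-completeness is then $tp_{C_{i^*}} \in \mathcal{U^D}(stp_K)$ together with the $\mathcal{M}$-extension morphism $q$.

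The main obstacle is verifying that this candidate actually yields the required extension diagrams. What is needed is that the two extension diagrams embedding $tp_K$ into $tp_G$ via $m$ split vertically through $C_{i^*}$ into (i) extension diagrams from $tp_K$ to some $tp_{C_{i^*}}$ via $a_{i^*}$, and (ii) extension diagrams from $tp_{C_{i^*}}$ to $tp_G$ via $q$. Since $q \in \mathcal{M}$, this should follow from standard pushout decomposition in $\mathcal{M}$-adhesive categories along $\mathcal{M}$-morphisms: each DPO defining the extension at $G$ decomposes into a DPO at $C_{i^*}$ plus a further pushout from $C_{i^*}$ to $G$. This simultaneously shows that $a_{i^*}$ satisfies the gluing conditions for the derived spans of $tp_K$, so that $tp_{C_{i^*}}$ is indeed retained in the disjunctive unfolding per the remark after \autoref{def:finite-unfolding}, and that $q$ provides the required $\mathcal{M}$-extension from $tp_{C_{i^*}}$ to $tp_G$.

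For finiteness, by \autoref{def:condition} every AC is built from finite conjunctions, hence its equivalent disjunctive form $\vee_{i \in I}\exists(a_i: K \to C_i, c_i)$ has finite $I$; thus each $\mathcal{U^D}(stp)$ has at most $|I|$ elements and is finite. A finite union of finite sets is finite, so if $\mathcal{S}$ is finite, so is $\cup_{stp \in \mathcal{S}} \mathcal{U^D}(stp)$.
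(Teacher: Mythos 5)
Your overall skeleton matches the paper's argument: invoke \autoref{thm:completeness-ic-acs} to embed some initial conflict $stp_K$ into $tp_G$ via $m\models \ac_K\wedge\ac^*_K$, use regularity to factor $m = q\circ a_{i^*}$ with $q\in\M$, take $tp_{C_{i^*}}$ as the witness, and get finiteness from the finite index sets of \autoref{def:condition}. The embedding of the plain transformation pair at $C_{i^*}$ into $tp_G$ via $q\in\M$ (which the paper obtains from the Restriction Theorem, and you obtain by pushout decomposition along the $\M$-morphism $q$) is also fine.

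However, there is a genuine gap: you never show that $tp_{C_{i^*}}$ is actually a legitimate element of the disjunctive unfolding in the intended sense, namely a pair of \emph{direct transformations with ACs} that is moreover \emph{parallel dependent}. Your pushout decomposition only produces the AC-disregarding (plain) DPO diagrams at $C_{i^*}$; it says nothing about whether the matches $a_{i^*}\circ o_1$ and $a_{i^*}\circ o_2$ satisfy $\ac_{L_1}$ and $\ac_{L_2}$ (required by \autoref{def:rules} for these to be direct transformations at all), nor about whether the resulting pair is conflicting --- which is the entire point of unfolding initial conflicts into a complete set of \emph{conflicts}. The missing idea is to exploit the negative remainder: since the $b_j$ in $c_{i^*}=\wedge_j\neg\exists(b_j,d_j)$ are non-isomorphic, the identity $id_{C_{i^*}}$ trivially satisfies $c_{i^*}$, hence $a_{i^*}\models\exists(a_{i^*}\colon K\to C_{i^*},c_{i^*})$ and, by the regularity equivalence, $a_{i^*}\models \ac_K\wedge\ac^*_K$. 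From $a_{i^*}\models\ac_K$ the Embedding Theorem (AC-consistency) yields that the shifted matches satisfy the rule ACs, and from $a_{i^*}\models\ac^*_K$ the characterization of parallel dependence with ACs (Lemma 6.2 of \cite{EhrigGHLO12,EhrigGHLO14}) yields that $tp_{C_{i^*}}$ is parallel dependent. Without this step your candidate witness may not even lie in $\mathcal{U^D}(stp_K)$ as a conflicting transformation pair, so the proof as written does not go through.
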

\begin{proof}[idea]
From \autoref{thm:completeness-ic-acs} we know there exists some $stp_K:\tuple{tp_K,\ac_K,\ac^*_K}$ from $S$ that can be embedded into $tp_G$ via some extension morphism $m$ and, since $m \models \ac_K \wedge \ac^*_K$ and $stp_K$ is regular, $m \models  \exists (a_i:K \rightarrow C_i, c_i)$ for some $i\in I$. Then the key issue is to show that $tp_{C_i}: D_{1,i}\ldder_{\prule_1,a_i \circ o_1} C_i \dder_{\prule_2,a_i \circ o_2} D_{2,i}$ can be embedded into $tp_G$ and that $tp_{C_i}$ is conflicting. Finally, finiteness of $\cup_{stp \in \mathcal{S}}~\mathcal{U^D}(stp)$ is a consequence of the finiteness of each $\mathcal{U^D}(stp)$.
\qed
\end{proof}

It is possible to automatically check if some initial conflict is regular by using dedicated automated reasoning~\cite{LOicgt14} as well as symbolic model generation for ACs~\cite{SchneiderLO17} as follows. The reasoning mechanism~ \cite{LOicgt14} is shown to be refutationally complete ensuring that if the condition $\ac_K \wedge \ac^*_K$  of some initial conflict is unsatisfiable, this will be detected eventually.  Moreover, the related symbolic model generation mechanism~\cite{SchneiderLO17} is able to automatically transform each condition $\ac_K \wedge \ac^*_K$ into some  disjunction $\vee_{i \in I} \exists (a_i:K \rightarrow C_i, c_i)$ with $c_i$ a negative remainder if such an equivalence holds.

\subsection{Unfolding for Rules with NACs}\label{subsec:nac}
We show that in the case of having rules with NACs\footnote{A rule with NACs consists of a plain rule with a conjunction of NACs as application condition, which is the most common way of using NACs since their introduction in \cite{HabelHT96}.}, initial conflicts are regular.  This means that in this special case there exists a complete subset of conflicts that is e.g. in the case of graphs (and assuming finite rules) also finite.  This conforms to the findings in~\cite{LambersEO06,Lambers2009}, where an $\mathcal{M}$-complete set of critical pairs -- as specific subset of conflicts -- for graph transformation rules with NACs was introduced~\cite{LambersEO06} (and generalized to $\mathcal{M}$-adhesive transformation systems~\cite{Lambers2009}).

\begin{theorem}[regular initial conflicts for rules with NACs]\label{thm:monotone-NACs} Consider an $\mathcal{M}$-adhesive system with initial transformation pairs for conflicts along plain rules. Given some initial conflict $stp_K:\tuple{tp_K,\ac_K,\ac^*_K}$ for a pair of rules $\tuple{\prule_1,\prule_2}$ with $\ac_{L_i} = \wedge_{j\in J} \neg \exists{n_j : L_i \rightarrow N_j}$ for $i=1,2$ and $J$ some finite index set, then it is \emph{regular}. In particular, $\ac_K \wedge \ac^*_K$ is equivalent to a condition $\vee_{i \in I} \exists (a_i:K \rightarrow C_i, c_i)$ with $c_i = \wedge_{q \in Q} \neg \exists n_q$ a condition on $C_i$ and $I$ some non-empty index set.
\end{theorem}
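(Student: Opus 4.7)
The plan is to track the propositional shape of the two auxiliary application conditions $\ac_K$ and $\ac^*_K$ through their defining constructions ($\Shift$ and $\Left$) and show that, when the rule ACs $\ac_{L_1}, \ac_{L_2}$ are conjunctions of NACs, the shapes are simple enough that standard Boolean manipulations (de Morgan, distributivity) put $\ac_K\wedge\ac^*_K$ into the required disjunctive normal form. The key structural observation I would establish first is a preservation lemma: both $\Shift(b,-)$ and $\Left(\prule,-)$ map conjunctions of NACs to (finite) conjunctions of NACs. This follows from the constructions in \cite{EhrigGHLO14, HabelP09}: $\Shift$ distributes over $\wedge$ and negation, and $\Shift(b,\neg\exists n)$ is by construction a finite conjunction of NACs indexed by the $\E'$-$\M$ pair factorizations of $(b,n)$; while $\Left$ distributes over $\wedge$, and $\Left(\prule,\neg\exists n)$ is either $\ctrue$ (when the required pushout complement of $n$ along $\prule$ does not exist) or a single NAC $\neg\exists n'$ on $L$.

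Using this preservation lemma I would compute the two ingredients. For $\ac_K=\Shift(o_1,\ac_{L_1})\wedge\Shift(o_2,\ac_{L_2})$, both shifts yield conjunctions of NACs on $K$, hence $\ac_K\equiv\wedge_{s\in S}\neg\exists(m_s\colon K\to M_s)$ for some finite $S$. For each $\ac^*_{K,d_{ij}}$, the inner $\Shift(c_{\cdot}\circ d_{ij},\ac_{L_{\cdot}})$ is a conjunction of NACs on the corresponding $D$-object, and the outer $\Left(p^*_{\cdot},-)$ again a conjunction of NACs on $K$ (or $\false$ if the companion $d_{ij}$ does not exist). Therefore the conjunction $\ac^*_{K,d_{12}}\wedge\ac^*_{K,d_{21}}$ is either $\false$ (giving $\ac^*_K\equiv\ctrue$, the use-delete/delete-use case) or a conjunction of NACs $\wedge_{t\in T}\neg\exists(n_t\colon K\to N_t)$, so by de Morgan $\ac^*_K\equiv\vee_{t\in T}\exists n_t$.

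Now I would combine the two. In the case $\ac^*_K\equiv\ctrue$, the conjunction $\ac_K\wedge\ac^*_K$ reduces to $\ac_K$, which is already a single disjunct $\exists(\id_K,\ac_K)$ of the required form. In the general case, distributing the disjunction yields
\[
\ac_K\wedge\ac^*_K\ \equiv\ \bigvee_{t\in T}\bigl(\exists n_t\wedge\ac_K\bigr)\ \equiv\ \bigvee_{t\in T}\exists\!\bigl(n_t,\ \Shift(n_t,\ac_K)\bigr),
\]
where the last equivalence is the standard shift identity $\exists n\wedge c\equiv\exists(n,\Shift(n,c))$. By the preservation lemma applied once more, $\Shift(n_t,\ac_K)$ is a conjunction of NACs on $N_t$, so taking $I=T$, $a_i=n_i$, $C_i=N_i$ and $c_i=\Shift(n_i,\ac_K)$ exhibits the required regular form. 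A final normalization removes any NAC $\neg\exists b$ with $b$ iso (such a clause is equivalent to $\cfalse$, so removing its disjunct preserves equivalence), which secures the non-isomorphic side condition in Def.~\ref{def:finite-unfolding}; non-emptiness of $I$ after normalization follows from $stp_K$ being conflict-inducing, since otherwise $\ac_K\wedge\ac^*_K$ would be unsatisfiable and the unfolding empty, contradicting Def.~\ref{def:symbolic-pair-conflict}.

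The main obstacle is the preservation lemma for $\Shift$ and $\Left$: one has to be careful that the construction of $\Shift(b,\neg\exists n)$ really produces only NACs (and not, e.g., nested existentials coming from the recursive handling of the inner AC), and that the two operators commute well enough with $\wedge$ so that no extra propositional structure is introduced. Once this structural bookkeeping is pinned down, the rest is de Morgan and distributivity; everything else follows from the explicit form of $\ac_K$ and $\ac^*_K$ in Def.~\ref{def:symbolic-pair}.
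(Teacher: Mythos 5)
Your proposal is correct and follows essentially the same route as the paper's own (sketched) proof: compute $\ac_K$ as a conjunction of NACs and $\ac^*_K$ as $\ctrue$ or a disjunction of PACs via the $\Shift$/$\Left$ constructions, then fold $\ac_K$ under each PAC morphism (respectively under $\exists id_K$ when $\ac^*_K\equiv\ctrue$) using the identity $\exists n\wedge c\equiv\exists(n,\Shift(n,c))$. Your additional bookkeeping --- the preservation lemma for $\Shift$/$\Left$, discarding disjuncts whose negative remainder contains an isomorphic NAC, and deriving non-emptiness of $I$ from the conflict-inducing property --- only makes explicit details that the paper leaves implicit.
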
 
\begin{proof}[idea]
This follows from \autoref{def:symbolic-pair} and the constructions~\cite{EhrigGHLO14} related to \autoref{lem:left} and \autoref{lem:shift}.\qed
\end{proof}

The negative remainder $c_i$ of each literal in $\vee_{i \in I} \exists (a_i:K \rightarrow C_i, c_i)$ of a regular initial conflict for rules with NACs thus consists of a set of NACs. Intuitively this means that we obtain for each initial conflict an $\M$-complete subset of concrete conflicts by adding the context described by $a_i$. As long as no NAC from $c_i$ is violated we can extend such a concrete conflict to further ones.  

\begin{corollary}[complete unfolding: rules with NACs]\label{cor:monotone-NACs} Consider an $\mathcal{M}$-adhesive system, with initial transformation pairs for conflicts along plain rules. Given a rule pair $\tuple{\prule_1,\prule_2}$ with $\ac_{L_i} = \wedge_{j\in J} \neg \exists{n_j : L_i \rightarrow N_j}$ for $i=1,2$, then $\cup_{stp \in \mathcal{S}}\mathcal{U^D}(stp)$ is $\mathcal{M}$-complete w.r.t. parallel dependence.  
\end{corollary}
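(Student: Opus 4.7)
The plan is to derive the corollary as an immediate consequence of combining \autoref{thm:monotone-NACs} with \autoref{thm:complete-unfolding}. First I would apply \autoref{thm:monotone-NACs} to the given rule pair $\tuple{\prule_1,\prule_2}$: since both $\ac_{L_1}$ and $\ac_{L_2}$ are conjunctions of NACs of the required form, the theorem ensures that every initial conflict $stp_K:\tuple{tp_K,\ac_K,\ac^*_K}$ for $\tuple{\prule_1,\prule_2}$ is regular, with $\ac_K \wedge \ac^*_K$ equivalent to a disjunction $\vee_{i\in I}\exists(a_i : K \to C_i, c_i)$ whose remainders $c_i$ are conjunctions of NACs on $C_i$.

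Next I would invoke \autoref{thm:complete-unfolding} on the set $\mathcal{S}$ of initial conflicts for $\tuple{\prule_1,\prule_2}$. By the previous step, every member of $\mathcal{S}$ is regular, which matches the hypothesis of \autoref{thm:complete-unfolding}. Its conclusion then gives that $\cup_{stp\in \mathcal{S}}\mathcal{U^D}(stp)$ is $\mathcal{M}$-complete with respect to parallel dependence, which is exactly the statement of the corollary.

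There is no real obstacle here, as the corollary is by design a specialization of the general sufficient condition of \autoref{thm:complete-unfolding} to the NAC case, where regularity is automatic by \autoref{thm:monotone-NACs}. The only bookkeeping step I would explicitly verify is that the regularity guarantee of \autoref{thm:monotone-NACs} applies \emph{uniformly} across $\mathcal{S}$: its hypothesis only constrains the shape of $\ac_{L_1}$ and $\ac_{L_2}$, not the particular initial conflict, so the universal quantification "for all $stp \in \mathcal{S}$" needed by \autoref{thm:complete-unfolding} is indeed satisfied. With this observation in place the proof reduces to a single chain of implications.
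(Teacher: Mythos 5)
Your proof is correct and matches the paper's own argument exactly: the paper likewise derives the corollary by combining Theorem~\ref{thm:monotone-NACs} (regularity of all initial conflicts in the NAC case) with Theorem~\ref{thm:complete-unfolding} ($\mathcal{M}$-completeness of the disjunctive unfolding for regular initial conflicts). Your added remark that regularity holds uniformly over $\mathcal{S}$ is a sensible bookkeeping check and does not change the argument.
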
 
\begin{proof}
This follows directly from \autoref{thm:complete-unfolding} and \autoref{thm:monotone-NACs}.
\qed
\end{proof}

We show moreover that the initial conflict definition is a \emph{conservative extension} of the critical pair definition for rules with NACs as given in~~\cite{LambersEO06,Lambers2009}. In particular, we show that each conflict in the disjunctive unfolding of an initial conflict as chosen in the proof of \autoref{thm:monotone-NACs} is in particular a critical pair for rules with NACs. Note that a critical pair for rules with NACs is a conflicting pair of transformations such that (1) its plain transformations have jointly surjective matches and are use-delete/delete-use conflicting, or (2) the transformations are AC conflicting (and possibly also use-delete/delete-use conflicting) in such a way that one of the rules produces elements responsible for violating one of the NACs not violated yet before rule application without considering additional context not stemming already from one of the rules or the violated NAC (i.e. technically the morphism violating the NAC and the corresponding co-match need to be jointly surjective).  

\begin{theorem}[conservative unfolding]\label{thm:conservative-NACs} Consider an $\mathcal{M}$-adhesive system with initial transformation pairs for conflicts along plain rules. Given some initial conflict $stp_K:\tuple{tp_K,\ac_K,\ac^*_K}$ for a pair of rules $\tuple{\prule_1,\prule_2}$ with $\ac_{L_i} = \wedge_{j\in J} \neg \exists{n_j : L_i \rightarrow N_j}$ for $i=1,2$ and $J$ some finite index set, then each conflict as chosen in the proof of \autoref{thm:monotone-NACs} in $\mathcal{U^D}(stp)$ is in particular a critical pair for $\tuple{\prule_1,\prule_2}$ as given in~\cite{LambersEO06,Lambers2009}.  
\end{theorem}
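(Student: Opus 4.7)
The plan is to case-split on the type of initial conflict and to trace through the explicit $\Shift$-and-$\Left$ construction used in the proof of \autoref{thm:monotone-NACs} in order to recognise, in each literal $\exists(a_i\colon K\to C_i,c_i)$ of the resulting disjunction, the combinatorial pattern that defines a critical pair for rules with NACs as in \cite{LambersEO06,Lambers2009}.

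First, I would recall that a critical pair for rules with NACs comes in two flavours: (1) a use-delete/delete-use conflict whose two matches are jointly epimorphic on $K$, and (2) an AC-conflict where the produced co-match together with the morphism realising the NAC violation is jointly epimorphic, with the extra minimality condition that the NAC was not violated before the conflict-producing rule fired. The constructions of $\Shift$ and $\Left$ in~\cite{EhrigGHLO14} iterate over jointly epimorphic overlap diagrams arising from $\Epi'$-$\M$ pair factorizations, which is precisely the combinatorial data underlying both flavours.

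In the first case, $stp_K$ is a use-delete/delete-use initial conflict, so $tp_K$ is a plain initial conflict and $(o_1,o_2)$ is jointly epimorphic. One of $\ac^*_{K,d_{12}},\ac^*_{K,d_{21}}$ is $\cfalse$, hence $\ac^*_K\equiv\ctrue$, and $\ac_K=\Shift(o_1,\ac_{L_1})\wedge\Shift(o_2,\ac_{L_2})$ is already a conjunction of negated existentials. The disjunctive form chosen in the proof of \autoref{thm:monotone-NACs} is then a single literal with $a_i=\mathrm{id}_K$ and negative remainder $c_i$ the same conjunction of NACs, so $tp_{C_i}=tp_K$ satisfies condition~(1). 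In the second case, $stp_K$ is the AC initial conflict $stp_{L_1+L_2}$, and $\ac^*_K$ is the negation of a conjunction of $\Left$-$\Shift$-ed NACs. Distributing the outer negation and unfolding the $\Shift$ construction from~\cite{EhrigGHLO14} produces disjuncts whose morphism $a_i\colon L_1+L_2\to C_i$ is obtained from an $\Epi'$-$\M$ factorization of the pair formed by a NAC graph morphism and the corresponding derived-span morphism. Consequently, $a_i$ and the NAC-violating morphism into $C_i$ are jointly epimorphic together with the co-match of the conflict-producing rule, and the negative remainder $c_i$ (a conjunction of NACs on $C_i$) forbids the same NAC from having been violated prior to the rule application. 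This is precisely condition~(2).

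The main obstacle is the bookkeeping for the second case: one has to unwind $\Left(p^*_j,\Shift(c_j\circ d_{jk},\neg\exists n))$ one outer connective at a time, observing that each $\Shift$ step contributes an $\Epi'$-$\M$ factorization and that $\Left$ pulls it back along the rule span without enlarging $C_i$ beyond what the NAC graph and the co-match jointly contribute. Once this correspondence between the disjuncts produced by \autoref{thm:monotone-NACs} and the overlap diagrams used in~\cite{LambersEO06,Lambers2009} is made explicit, the critical-pair property is immediate.
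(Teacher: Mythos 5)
Your proof takes essentially the same route as the paper's: a case split on use-delete/delete-use versus AC initial conflicts, followed by tracing the $\Shift$/$\Left$ constructions behind \autoref{thm:monotone-NACs} to match each unfolded conflict against conditions (1.a)/(2.a) resp.\ (1.b)/(2.b) of the critical-pair definition for rules with NACs, with the joint epimorphism requirements holding ``by construction'' of the $\Epi'$-$\M$ factorizations. Your observation that the use-delete/delete-use case yields the single literal $\exists(id_K,\ac_K)$ (since $\ac^*_K\equiv\ctrue$ there) is consistent with, and slightly more explicit than, the paper's treatment, which merely distinguishes whether $a_i$ adds context.
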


\begin{example}[conservative unfolding]\label{example:conservative}
Consider again the rules from \autoref{example:inheritance} (having only NACs as ACs) and their application to the graph $G = \node{} \ \node{}$. The corresponding  transformation pair $tp_{G}$ is a critical pair for rules with NACs as given in~\cite{LambersEO06,Lambers2009}. This is because it is in particular a conflicting pair of transformations, and the morphism violating the NAC (since finding the three nodes) and therefore causing the conflict after applying the first rule to $G = \node{} \ \node{}$ obtaining some graph $H_1 = \node{}\outgoing\node{} \ \node{}$ is jointly surjective together with the corresponding co-match. As argued already in \autoref{example:ic} this critical pair for rules with NACs belongs to the unfolding (and in particular to the disjunctive unfolding) of the unique AC initial conflict $stp_{L_1 +L_2}:\tuple{tp_{L_1 +L_2},\ac_{L_1 +L_2},\ac^*_{L_1 +L_2}}$.   
\end{example}

\section{Conclusion and Outlook}\label{sec:conclusion}

In this paper we have \emph{generalized the theory of initial conflicts} (from plain rules, i.e. rules without application conditions) to \emph{rules with application conditions} (ACs) in the framework of $\M$-adhesive transformation systems.  We build on the notion of symbolic transformation pairs, since it turns out that it is not possible to find a complete subset of concrete conflicting transformation pairs in the case of rules with ACs. We have shown that that initial conflicts are (minimally) complete w.r.t. parallel dependence as symbolic transformation pairs. Moreover, initial conflicts represent (analogous to the case of plain rules) proper subsets of critical pairs in the sense that for each critical pair (or also for each conflict), there exists a unique initial conflict representing it.  We concluded the paper by showing sufficient conditions for finding unfoldings of initial conflicts that lead to (finite and) complete subsets of conflicts (as in the case of rules with NACs).  Thereby we have shown that initial conflicts for rules with ACs represent a conservative extension of the critical pair theory for rules with NACs.

As future work we aim at finding \emph{further interesting classes} allowing finite and (minimally) complete unfoldings into \emph{subsets of conflicts}. This will serve as a guideline to be able to \emph{develop and implement efficient conflict detection} techniques for rules with (specific) ACs, which has been an open challenge until today.  
We are moreover planning to develop (semi-)automated detection of unfoldings of initial conflicts of  rules with arbitrary ACs using dedicated automated reasoning and model finding for graph conditions~\cite{Pennemann2009,LOicgt14,SchneiderLO17}. 
It would  be interesting to investigate in which \emph{use cases} initial conflicts (or critical pairs) are useful already as symbolic transformation pairs, and in which use cases we rather need to consider unfoldings indeed. This is in line with the research on multi-granular conflict detection~\cite{BornL0T17,Lambers0TBH18,LambersBKST19} investigating different levels of granularity that can be interesting from the point of view of applying conflict detection to different use cases. 
Finally, we plan to investigate conflict detection in the light of initial conflict theory for \emph{attributed graph transformation}~\cite{EhrigEPT06,HristakievP16,KulcsarDLVS15}, and in particular the case of rules with so-called attribute conditions more specifically. It would also be interesting to further investigate initial conflicts for transformation rules (with ACs) not following the DPO approach. For example, one may consider the single-pushout (SPO) approach introduced in \cite{Lowe93}, which is a generalization of the DPO framework where only one morphism defines the rule, which may be partial to allow deletion. In \cite{HabelHT96}, SPO rules with negative application conditions are considered and the Local Confluence and Parallelism Theorems are shown. As far as we know, a theory on SPO rules with nested application conditions is missing. Moreover, the implications of initial conflict theory for the case of \emph{graphs with inheritance}~\cite{GolasLEO12} or \emph{rule amalgamation}~\cite{TaentzerG15,BornT16} need to be further investigated. 

\textbf{Acknowledgement.} We thank Jens Kosiol for pointing out that the set of initial conflicts for plain rules is not only complete, but also minimally complete. We were able to transfer this result to rules with ACs in this paper. Many thanks also to the reviewers for their detailed and constructive comments helping to finalize the paper. 

%
%
\bibliographystyle{splncs04}
\bibliography{literature}
\newpage

\appendix
\section{Proofs}\label{sec:proofs}

{\noindent \bf Proof of \autoref{cor:minimal}}
\begin{proof}
Assume that $\mathcal{S'}$ exists. By completeness w.r.t. parallel dependence of $\mathcal{S'}$, and since the cardinality of $\mathcal{S'}$ is smaller than that of $\mathcal{S}$, there exists at least one $tp: H_1\ldder_{p_1,m_1}G\dder_{p_2,m_2} H_2 \in \mathcal{S'}$ such that $tp$ can be embedded into two non-isomorphic initial conflicts $tp': H'_1\ldder_{p_1,m'_1}G'\dder_{p_2,m_2} H'_2$ and $tp'': H''_1\ldder_{p_1,m''_1}G''\dder_{p_2,m''_2} H''_2$ in $\mathcal{S}$ via extension morphisms $m':G\rightarrow G'$ and $m'':G\rightarrow G''$, respectively. Since $tp'$ as well as $tp''$ are initial conflicts, it follows that $tp'$ and $tp''$ are in particular initial transformation pairs for $tp'$ and $tp''$, respectively.  Consequently $tp'$ and $tp''$ are initial w.r.t. $tp$, otherwise this would lead to a contradiction with being initial to $tp'$ and $tp''$. 
Because of uniqueness of initial transformation pairs it would follow that $tp'$ and $tp''$ are isomorphic, which is a contradiction.
\qed
\end{proof}


To prove \autoref{lem:initial-parallel-independent} we will use the following lemma:

\begin{lemma}[extensions of coproduct transformation pair]
\label{lem:coproduct extensions} 
Given rules $p1: L_1 \leftarrow I_1 \rightarrow R_1$ and $p2: L_2 \leftarrow I_2 \rightarrow R_2$ and transformation pairs $tp: H_1\ldder_{p_1,m_1}G\dder_{p_2,m_2} H_2$ and $tp_{L_1 +L_2}: R_1 + L_2 \ldder_{p_1, i_1} L_1 + L_2 \dder_{p_2, i_2} L_1 + R_2$, where $tp$ is parallel independent, we have that the coproduct mediating morphism $m: L_1 + L_2 \rightarrow G$ defines the extension diagram:

$$
    \xymatrix{
          R_1 + L_2 \ar@{->}[d]_{}  
          & L_1 + L_2  \ar@{=>}[l] \ar@{=>}[r] \ar@{->}[d]^{m}  
          & L_1 + R_2 \ar@{->}[d]          \\
          H_1 
          & G  \ar@{=>}[l] \ar@{=>}[r] 
          & H_2  \\
  }
  $$

\end{lemma}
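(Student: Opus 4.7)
The plan is to construct the two vertical morphisms into $D_1$ and $H_1$ from the coproduct universal property, and then verify the two DPO squares are pushouts using the parallel independence of $tp$ together with the pushout property of the original DPO of $p_1$ over $G$.

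First I would construct the three vertical morphisms. The morphism $m\colon L_1+L_2\to G$ is the coproduct mediator $[m_1,m_2]$ of the original matches, so $m\circ i_1=m_1$ and $m\circ i_2=m_2$. By parallel independence of $tp$, there exists $d_{21}\colon L_2\to D_1$ with $c_1\circ d_{21}=m_2$ (where $c_1\colon D_1\to G$ is the left leg of the pushout complement in the DPO for $p_1$ on $G$). I then define the middle vertical morphism $u\colon I_1+L_2\to D_1$ as the coproduct mediator $[I_1\to D_1,\,d_{21}]$, and the right vertical morphism $v\colon R_1+L_2\to H_1$ as the coproduct mediator $[R_1\to H_1,\,(D_1\to H_1)\circ d_{21}]$.

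Next I would verify that both squares of the extension diagram commute. For the left square, on the $I_1$-component the commutativity reduces to $c_1\circ(I_1\to D_1)=m_1\circ(I_1\to L_1)$, which is the commutativity of the original pushout of $p_1$ on $G$; on the $L_2$-component it reduces to $c_1\circ d_{21}=m_2$, which is exactly the parallel independence condition. A symmetric calculation, using that $D_1\to H_1$ and $R_1\to H_1$ come from the second pushout of the DPO of $p_1$ over $G$, handles the right square.

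The main (and only genuinely non-routine) step is verifying that the two squares are pushouts. I would do this by checking the universal property directly. Given a competing cocone on $L_1+L_2\leftarrow I_1+L_2\to D_1$ with target $X$, I decompose the morphism $L_1+L_2\to X$ through $i_1,i_2$ into $f_1\colon L_1\to X$ and $f_2\colon L_2\to X$; the cocone condition on the $I_1$-component then expresses that $(f_1,g)$ is a cocone on $L_1\leftarrow I_1\to D_1$, so by the pushout property of the original DPO of $p_1$ over $G$ there is a unique $u\colon G\to X$ with $u\circ m_1=f_1$ and $u\circ c_1=g$. Checking $u\circ m\circ i_2 = u\circ m_2 = u\circ c_1\circ d_{21} = g\circ d_{21}=f_2$ (the last equality from the $L_2$-component of the cocone condition) shows $u\circ m=f$, and uniqueness follows from the pushout of the original DPO. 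The same argument, replacing $G$ by $H_1$ and $c_1$ by $D_1\to H_1$, proves the right square is a pushout. Putting these two extension squares together yields the required extension diagram for the first derivation step; the second step $L_1+L_2\dder L_1+R_2$ extending to $G\dder H_2$ is entirely symmetric, using $d_{12}\colon L_1\to D_2$ in place of $d_{21}$. \qed
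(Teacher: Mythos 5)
Your proof is correct. You construct exactly the vertical morphisms the paper uses---the coproduct mediators $[m_1,m_2]$, $[I_1\to D_1,\,d_{21}]$ and $[R_1\to H_1,\,(D_1\to H_1)\circ d_{21}]$, with $d_{21}$ supplied by parallel independence---and you correctly isolate the only substantive point, namely that the two squares connecting the coproduct DPO to the DPO over $G$ are pushouts. Where you differ is in how that point is discharged: the paper decomposes each connecting square into three smaller pushouts (the original pushout for $p_1$ over $G$ together with two identity squares on $L_2$ and $D_1$) and invokes the Butterfly Lemma of~\cite{EhrigEPT06} to conclude that the composite is a pushout, whereas you verify the universal property from scratch, splitting a competing cocone along the coproduct injections, handing the $I_1$-component to the universal property of the original pushout, and checking that the $L_2$-component is automatically absorbed via $f_2=g\circ d_{21}$. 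Your route is more elementary and self-contained; the paper's is shorter given that the Butterfly Lemma is already available in the toolbox, but the two arguments encode the same structural decomposition. One small notational remark: the morphism you call $c_1\colon D_1\to G$ is named $k_1$ in the paper's Definition~\ref{def:parallelIndependence} (there $c_1$ denotes $D_1\to H_1$), so the independence condition you invoke reads $k_1\circ d_{21}=m_2$ in the paper's notation; since you state your convention explicitly, this is harmless.
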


\begin{proof}
Let $tp$ be:

\[\tikz[node distance=4em,shape=rectangle,outer sep=1pt,inner sep=2pt,label distance=-1.25em]{
\node(H){$G$};
\node(D1)[node distance=7em,left of=H]{$D_1$};
\node(G)[left of=D1]{$H_1$};
\node(L1)[above of=G]{$R_1$};
\node(K1)[right of=L1]{$I_1$};
\node(R1)[right of=K1]{$L_1$};
\node(D2)[node distance=7em,right of=H]{$D_2$};
\node(M)[right of=D2]{$H_2$};
\node(R2)[above of=M]{$R_2$};
\node(K2)[left of=R2]{$I_2$};
\node(L2)[left of=K2]{$L_2$};
\draw[altmonomorphism] (K1) -- (L1);
\draw[monomorphism] (K1) -- (R1);
\draw[monomorphism] (D1) -- node[below]{\small $k_1$} (H);
\draw[altmonomorphism] (D1) --  node[below]{\small $c_1$} (G);
\draw[morphism] (L1) -- node[left]{} (G);
\draw[morphism] (K1) -- (D1);
\draw[morphism] (R1) -- node[below=5pt]{\small $m_1$} (H);
\draw[altmonomorphism] (K2) -- (L2);
\draw[monomorphism] (K2) -- (R2);
\draw[altmonomorphism] (D2) -- node[below]{\small $k_2$} (H);
\draw[monomorphism] (D2) -- node[below]{\small $c_2$} (M);
\draw[morphism] (L2) -- node[below=5pt]{\small $m_2$} (H);
\draw[morphism] (K2) -- (D2);
\draw[morphism] (R2) -- (M);
\draw[morphism,dotted] (L2) -- node[left=15pt]{\small $d_{21}$} (D1);
\draw[morphism,dotted] (R1) -- node[right=15pt]{\small $d_{12}$} (D2);}\]
Let us prove that if $m: L_1 + L_2 \rightarrow G$ is the mediating morphism for the coproduct, satisfying that $m\circ i_1 = m_1$ and $m\circ i_2 = m_2$, where 
$i_1: L_1 \rightarrow L_1 + L_2$ and $i_2: L_2 \rightarrow L_1 + L_2$ are the coproduct morphisms, then $m$ defines the extension diagram:

$$
    \xymatrix{
          R_1 + L_2 \ar@{->}[d]_{} \ar@{}[rd]|{(1)} 
          & L_1 + L_2  \ar@{=>}[l] \ar@{=>}[r] \ar@{->}[d]^{m} \ar@{}[rd]|{(2)} 
          & L_1 + R_2 \ar@{->}[d]          \\
          H_1 
          & G  \ar@{=>}[l] \ar@{=>}[r] 
          & H_2  \\
  }
  $$
In particular, we have to prove that (1) and (2) define extension diagrams. Let us argue w.r.t. extension diagram (1) (we can argue analogously for (2)). 

We have to prove that (3) and (4) are pushouts:

\[\tikz[node distance=2em,shape=rectangle,outer sep=1pt,inner sep=2pt,label distance=-1.25em]{
\node(RL){$R_1 + L_2$};
\node(AGL)[node distance=3em,strictly right of=RL]{$I_1 + L_2$};
\node(AGR)[node distance=3em,strictly right of=AGL]{$L_1 + L_2$};
\node(BGL)[strictly below of=AGL]{$D_{1}$};
\node(HR)[strictly below of=RL]{$H_{1}$};
\node(BGR)[strictly below of=AGR]{$G$};
\draw[monomorphism] (AGL) --  (AGR);
\draw[monomorphism] (BGL) --  (BGR);
\draw[altmonomorphism] (AGL) --  (RL);
\draw[altmonomorphism] (BGL) --  (HR);
\draw[morphism] (RL) --  (HR);
\draw[morphism] (AGL) --  (BGL);
\draw[morphism] (AGR) -- node[overlay,right](m){\small $f$} (BGR);
\draw[draw=none] (RL) -- node[overlay](po1){\small (3)} (BGL);
\draw[draw=none] (AGL) -- node[overlay](po2){\small (4)} (BGR);
}\]

We have that squares (5), (6) and (7) below are pushouts, therefore (4) is also a pushout, according to the Butterfly Lemma (see  \cite{EhrigEPT06}). Similarly, since squares (8), (9) and (10) below are pushouts, for the same reason, (3) is also a pushout.

\[\tikz[node distance=2em,shape=rectangle,outer sep=1pt,inner sep=2pt,label distance=-1.25em]{
\node(GGL){$L_2$};
\node(GGK)[node distance=3em,strictly right of=GGL]{$D_{1}$};
\node(GGR)[node distance=3em,strictly right of=GGK]{$G$};
\node(GL)[strictly above of=GGL]{$L_2$};
\node(GK)[strictly above of=GGK]{$D_{1}$};
\node(GR)[strictly above of=GGR]{$G$};
\node(KK)[strictly above of=GK]{$I_{1}$};
\node(RR)[strictly above of=GR]{$L_{1}$};
\draw[draw=none] (KK) -- node[overlay](po1){\small (5)} (GR);
\draw[draw=none] (GL) -- node[overlay](po1){\small (6)} (GGK);
\draw[draw=none] (GK) -- node[overlay](po1){\small (7)} (GGR);
\draw[monomorphism] (KK) --  (RR);
\draw[morphism] (GL) -- node[above]{\small $d_{21}$} (GK);
\draw[monomorphism] (GK) -- node[above]{\small $k_1$} (GR);
\draw[morphism] (GGL) -- node[below]{\small $d_{21}$} (GGK);
\draw[monomorphism] (GGK) -- node[below]{\small $k_1$} (GGR);
\draw[morphism] (GL) --  (GGL);
\draw[morphism] (KK) --  (GK);
\draw[morphism] (GK) --  (GGK);
\draw[morphism] (RR) -- node[overlay,right](m){\small $m_1$} (GR);
\draw[morphism] (GR) --  (GGR);
\node(HGL)[node distance=6em,strictly right of=GGR]{$L_2$};
\node(HGK)[node distance=3em,strictly right of=HGL]{$D_{1}$};
\node(HGR)[node distance=3em,strictly right of=HGK]{$H_1$};
\node(HL)[strictly above of=HGL]{$L_2$};
\node(HK)[strictly above of=HGK]{$D_{1}$};
\node(HR)[strictly above of=HGR]{$H_1$};
\node(HKK)[strictly above of=HK]{$I_{1}$};
\node(HRR)[strictly above of=HR]{$R_{1}$};
\draw[draw=none] (HKK) -- node[overlay](po1){\small (8)} (HR);
\draw[draw=none] (HL) -- node[overlay](po1){\small (9)} (HGK);
\draw[draw=none] (HK) -- node[overlay](po1){\small (10)} (HGR);
\draw[monomorphism] (HKK) --  (HRR);
\draw[morphism] (HL) -- node[above]{\small $d_{21}$} (HK);
\draw[monomorphism] (HK) -- node[above]{\small $c_1$} (HR);
\draw[morphism] (HGL) -- node[below]{\small $d_{21}$} (HGK);
\draw[monomorphism] (HGK) -- node[below]{\small $c_1$} (HGR);
\draw[morphism] (HL) --  (HGL);
\draw[morphism] (HKK) --  (HK);
\draw[morphism] (HK) --  (HGK);
\draw[morphism] (HRR) -- (HR);
\draw[morphism] (HR) --  (HGR);
}\]

\end{proof}

{\noindent \bf Proof of \autoref{lem:initial-parallel-independent}}
\begin{proof}

By \autoref{lem:coproduct extensions}, we know that (1)+(2) is an extension diagram, where $m: L_1 + L_2 \rightarrow G$ is the mediating morphism for the coproduct $L_1 + L_2$.

$$
    \xymatrix{
          R_1 + L_2 \ar@{->}[d]_{} \ar@{}[rd]|{(1)} 
          & L_1 + L_2  \ar@{=>}[l] \ar@{=>}[r] \ar@{->}[d]^{m} \ar@{}[rd]|{(2)} 
          & L_1 + R_2 \ar@{->}[d]          \\
          H_1 
          & G  \ar@{=>}[l] \ar@{=>}[r] 
          & H_2  \\
  }
  $$

Let us now assume that $tp': H'_1\ldder_{p_1,m_1}G'\dder_{p_2,m_2} H'_2$ can be embedded in $tp$ via $f': G' \rightarrow G$, defining extension diagrams (5)+(6). 

$$
    \xymatrix{
          R_1 + L_2 \ar@{->}[d]_{} \ar@{}[rd]|{(3)} 
          & L_1 + L_2  \ar@{=>}[l] \ar@{=>}[r] \ar@{->}[d]^{m'} \ar@{}[rd]|{(4)} 
          & L_1 + R_2 \ar@{->}[d]          \\
          H'_1 \ar@{->}[d]_{} \ar@{}[rd]|{(5)} 
          & G'  \ar@{=>}[l] \ar@{=>}[r] \ar@{->}[d]^{f'} \ar@{}[rd]|{(6)} 
          & H'_2  \ar@{->}[d]         \\ 
                H_1  &G \ar@{=>}[r]\ar@{=>}[l] &H_2 
  }
  $$
We know that there is a unique morphism $m': L_1 + L_2 \rightarrow G'$, such that $g \circ i_1 = m'_1$ and $g \circ i_2 = m'_2$,  defining by \autoref{lem:coproduct extensions} the extension diagrams (3)+(4). Hence, we only have to prove that  
$f' \circ m' = m$, but we know that $m: L_1 + L_2 \rightarrow G$ is the unique morphism that defines the outer extension diagrams (3)+(4)+(5)+(6), thus $f' \circ m' = m$. \cqd
\end{proof}

{\noindent \bf Proof of \autoref{thm:infinite-set-complete}}
\begin{proof}
ACs over the empty graph $\emptyset$ in particular express so-called graph properties.   Graph properties formulated this way have the same expressive power as first-order logic (FOL) on graphs\footnote{FOL on graphs is standard first-order logic with two additional built-in predicates: $Node(n)$ -to state that $n$ is a node and $Edge(e,n,n')$ to state that $e$ is an edge from $n$ to $n'$.} as shown in~\cite{HabelP09}. This means that we can express any graph property
equivalently using a first-order formula. For the same reason, we can state any graph property for graphs without isolated nodes  
using a first-order formula (i.e., any graph property that, in particular, implies that the given graph has no isolated nodes).  

Now consider the following two rules $\prule_1 = \langle \emptyset \leftarrow \emptyset \rightarrow \emptyset, c \rangle$ and 
$\prule_2 = \langle\emptyset \leftarrow \emptyset \rightarrow 1_N, \true \rangle$, with $1_N$ the graph consisting of an isolated node and $c$ some property (expressible using ACs over the empty graph) about graphs without isolated nodes. That is, the first rule can be applied to a graph $G$, if $G \models c$, and it leaves $G$ unchanged; and the second rule, which is always applicable, adds an isolated node to $G$. Thus the set of transformation pairs associated to these rules is 
$\SC_0 = \{G \Longleftarrow_{\prule_1} G \Longrightarrow_{\prule_2} G\oplus 1_N \mid G\models c \}$. Note that all the transformation pairs in $\SC_0$ are AC conflicts, since $G\oplus 1_N$ does not satisfy $c$ having an isolated node, which means that the set of conflicts of $\prule_1$ and $\prule_2$ is precisely $\SC_0$. In particular, for any graph $G$  either $G\models c$\footnote{A graph property is an application condition over the empty graph $\emptyset$ (or, in the general case, the initial object in the category of graphical structures considered), thus composed of literals of the form $c = \exists(\emptyset \rightarrow G', c')$. In particular, we say that $G \models c$ if  $i_G\models c$ with $i_G$ the unique morphism from $\emptyset$ to $G$.}  and both rules can be applied to $G$ in a unique way (since there is a unique match $h: \emptyset \rightarrow G$), or $G\not \models c$ such that $\prule_1$ cannot be applied. This means that, for any $G$, there is at most one transformation pair  $G \Longleftarrow_{\prule_1} G \Longrightarrow_{\prule_2} G\oplus 1_N$ starting from $G$. Consequently, if $G$ and $G'$ satisfy $c$, any morphism $h: G \rightarrow G'$ defines an extension diagram between their associated transformations. For these reasons, even if it is an abuse of notation, given sets of transformation pairs $\SC_0$ ($ \SC$), we write $G \in \SC_0$ (resp. $\SC$) meaning $G \Longleftarrow_{\prule_1} G \Longrightarrow_{\prule_2} G\oplus 1_N\in \SC_0$ (resp. $\SC$).

Now let us assume that a finite set $\SC$ of conflicts for rules $\prule_1$ and $\prule_2$ exists that is complete w.r.t. parallel dependence.    
This means that $G \in \SC_0$ if and only if there is a 
$G' \in \SC$ and a morphism $h:G' \rightarrow G$. We know, by the property of epi-mono factorization, that any morphism $h:G' \rightarrow G$ can be decomposed into $h = m\circ e$ with $m$ mono and $e$ epi. Moreover, since $G'$ is assumed to be finite,  
there is a finite number of epimorphisms whose source is $G'$. Let $Epi_{G'}$ be the set $\{G'' \mid \text{there\ is\ an\ epimorphism\ } e: G' \rightarrow G'' \}$, then we would have that $G \in \SC_0$  if and only if there is a 
$G' \in \SC$, a $G'' \in Epi_{G'}$ and a monomorphism $m:G'' \rightarrow G$. 
Note that, by definition of satisfaction (cf.~\autoref{def:condition}), the property that there is a monomorphism $m:G'' \rightarrow G$ is equivalent to 
$G \models \exists(\emptyset \rightarrow G'',\true)$. Therefore $G \in \SC_0$  if and only if there is a 
$G' \in \SC$, and a $G'' \in Epi_{G'}$ such that $G \models \exists(\emptyset \rightarrow G'',\true)$. But this means that $G \in \SC_0$  if and only if there is a 
$G' \in \SC$  such that $G\models \big(\bigvee_{G'' \in Epi_{G'}} \exists(\emptyset \rightarrow G'', \true)\big)$, or equivalently $G\models c'$, where c' is the condition
$$ c' =\big(\bigvee_{\substack{G'' \in Epi_{G'} \\ G' \in \SC}} \exists(\emptyset \rightarrow G'', \true)\big).$$
This means however that $c$ and $c'$ are logically equivalent, but this is a contradiction, since it is not possible to represent any arbitrarily complex first-order formula in terms of a finite disjunction of existential atoms. Therefore, our assumption was wrong and $\SC$ is in general infinite.\qed
\end{proof}

{\noindent \bf Proof of \autoref{thm:ic-is-cp}}
\begin{proof}
Given some initial conflict $stp_K: \tuple{tp_K,\ac_K,\ac^*_K}$,  we have that its matches $\tuple{o_1,o_2}$ are in $\mathcal{E'}$, since either $tp_K$ is an initial conflict or the initial parallel independent transformation pair for the plain rules $p_1$ and $p_2$. Note that for plain initial conflicts it is shown in \cite{LambersBO0T18} that their matches belong to $\mathcal{E'}$, and the coproduct morphisms also belong to $\mathcal{E'}$ because of uniqueness of the $\mathcal{E'}-\mathcal{M}$ pair factorization and the coproduct property.   
We moreover have that $stp_K$ is in particular a conflict-inducing symbolic transformation pair.  This means that there exists a conflicting pair $tp_G : H_1\ldder_{\prule_1,m_1}G\dder_{\prule_2,m_2} H_2$ in its unfolding $\mathcal{U}(stp_K)$.  Consequently, from the corresponding extension diagram with the extension morphism $m:K\rightarrow G$ we can derive directly  that $m\circ o_i$ for $i=1,2$ satisfy the gluing conditions. Moreover, because of \autoref{lem:shift} and the fact that $m\circ o_i \models \ac_{L_i}$ for $i=1,2$ we know that $m\models \ac_K$. 

Finally, we have to show that $m \models \ac^*_K$. Assume that $tp_K$ is an initial conflict for the plain rules $p_1$ and $p_2$. In this case $\ac^*_K$ is always true such that $m \models \ac^*_K$. Assume that $tp_K$ equals the initial parallel independent transformation pair $tp_{L_1 + L_2}$ as in \autoref{fig:embedding-pi}. We know that $tp_G$ is conflicting. It cannot be a use-delete/delete-use conflict, since this would be a contradiction with $tp_{L_1 + L_2}$ being parallel independent for plain rules.  Thus $tp_G$ is an AC conflict. This means that either $\ac_{L_1}$ or $\ac_{L_2}$ are not satisfied by the extended matches into $H_2$ and $H_1$, respectively. Then it follows because of \autoref{lem:shift}, \autoref{lem:left},  and the fact that diagonal morphisms for plain parallel independence are unique w.r.t. making the corresponding triangles commute that $m \models \ac^*_K = \ac^*_{L_1 + L_2}$ with $\ac^*_{L_1 + L_2,d_{12}}=\Left(p^*_2,\Shift(c_2{\circ}d_{12},\ac_{L_1}))$ and $\ac^*_{L_1 + L_2,d_{21}}=\Left(p^*_1,\Shift(c_1{\circ}d_{21},\ac_{L_2}))$. 
\qed
\end{proof}

{\noindent \bf Proof of \autoref{thm:completeness-ic-acs}}
\begin{proof}
Given a parallel dependent pair of transformations $tp_G: H_1\ldder_{\prule_1,m_1}G\dder_{\prule_2,m_2} H_2$ we need to show that some initial conflict via rules $\prule_1$ and $\prule_2$ exists that can be embedded into $tp_G$ via some extension morphism $m:K\rightarrow G$ with $m \models \ac_K \wedge \ac^*_K$. 

Assume that $tp_G$ is a use-delete/delete-use conflict.  Then $tp_G$ is also a use-delete/delete-use conflict as AC-disregarding transformation pair.  This means that an initial conflict $tp_K$ for the plain rules $p_1$ and $p_2$ exists according to \autoref{thm:completeness-core} that can be embedded via some extension morphism $m:K\rightarrow G$ into $tp_G$ as AC-disregarding transformation pair.   The symbolic transformation pair $stp_K:\tuple{tp_K,\ac_K,\ac^*_K}$ is obviously conflict-inducing. We moreover show that $m \models \ac_K \wedge \ac^*_K$.  It follows that $m \models \ac_K$ because of \autoref{lem:shift} and the fact that the matches of $tp_G$ satisfy $\ac_{L_1}$ and $\ac_{L_2}$. Moreover $m \models \ac^*_K$ since $tp_K$ is an initial conflict (i.e. delete-use) for the plain rules $p_1$ and $p_2$ such that $\ac^*_K$ is true.

Assume that $tp_G$ is not a use-delete/delete-use conflict, but it is an AC conflict.  Since $tp_G$ is not a use-delete/delete-use conflict we know that it is parallel independent as AC-disregarding transformation pair.  This means that the initial parallel independent transformation pair $tp_{L_1 +L_2}: R_1 + L_2 \ldder_{p_1, i_1} L_1 + L_2 \dder_{p_2, i_2} L_1 + R_2$ for the plain rules $p_1$ and $p_2$ can be embedded via extension morphism $m: L_1 + L_2 \rightarrow G$ into $tp_G$ as AC-disregarding transformation pair (as illustrated in \autoref{fig:embedding-pi}). The symbolic transformation pair $stp_{L_1 +L_2}:\tuple{tp_{L_1 +L_2},\ac_{L_1 +L_2},\ac^*_{L_1 +L_2}}$ is obviously conflict-inducing. We moreover show that $m \models \ac_{L_1 +L_2}\wedge \ac^*_{L_1 +L_2}$.  It follows that $m \models \ac_{L_1 +L_2}$ because of \autoref{lem:shift} and the fact that the matches of $tp_G$ satisfy $\ac_{L_1}$ and $\ac_{L_2}$. Moreover $m \models \ac^*_{L_1 +L_2}$ with $\ac^*_{L_1 + L_2,d_{12}}=\Left(p^*_2,\Shift(c_2{\circ}d_{12},\ac_{L_1}))$ and $\ac^*_{L_1 + L_2,d_{21}}=\Left(p^*_1,\Shift(c_1{\circ}d_{21},\ac_{L_2}))$
 because of \autoref{lem:shift}, \autoref{lem:left}, the fact that diagonal morphisms for plain parallel independence are unique w.r.t. making the corresponding triangles commute, and the fact that $tp_G$ is AC conflicting (i.e. either $\ac_{L_1}$ or $\ac_{L_2}$ are not satisfied by the extended matches into $H_2$ and $H_1$, respectively). 
\qed
\end{proof}

{\noindent \bf Proof of \autoref{cor:minimal-AC}}
\begin{proof}
Assume that there exists such a set $\mathcal{S'}$. Let $\mathcal{S'}^{pl}$ be the equally sized set of plain transformation pairs via the rules $\tuple{p_1,p_2}$ derived from $\mathcal{S'}$ by extracting merely the corresponding plain transformation pairs. Let $\mathcal{K}$ be the set of initial conflicts for the plain rules $\tuple{p_1,p_2}$ that did not lead to an use-delete/delete-use initial conflict for the rules $\tuple{\prule_1,\prule_2}$, since their corresponding symbolic transformation pair is not conflict-inducing (because its unfolding is empty). We start with showing that the set $\mathcal{S'}^{pl} \cup \mathcal{K}$ is complete w.r.t. parallel dependence  for the plain rules $\tuple{p_1,p_2}$. Given some conflict $tp$ via $\tuple{p_1,p_2}$, then its initial conflict either belongs to $\mathcal{K}$, or not.  Assume that it does not belong to $K$. We know by \autoref{thm:completeness-core} that some initial conflict via $\tuple{p_1,p_2}$ can be embedded into $tp$ that leads to some symbolic transformation pair $stp$ in $\mathcal{S}$. We know by \autoref{def:initial-conflict} that then there exists a conflicting transformation pair via $\tuple{\prule_1,\prule_2}$.  Since $\mathcal{S'}$ is complete w.r.t. parallel dependence for $\tuple{\prule_1,\prule_2}$, there needs to exist indeed a transformation pair in $\mathcal{S'}^{pl}$ that can be embedded into $tp$.

Now let $\mathcal{P}$ be the set of all use-delete/delete-use initial conflicts for $\tuple{\prule_1,\prule_2}$ in $\mathcal{S}$. We continue by assuming the following three cases: the size of $\mathcal{S'}$ is strictly smaller, equal or strictly larger than the size of $\mathcal{P}$.

In the first case, we assume that the size of $\mathcal{S'}$ is strictly smaller than the size of $\mathcal{P}$. We argue that the size of $\mathcal{S'}^{pl} \cup \mathcal{K}$ is then also strictly smaller than the size of the set of initial conflicts $\mathcal{I}$ for $\tuple{p_1,p_2}$. Note that $\mathcal{I} = (\mathcal{I}\setminus \mathcal{K}) \cup \mathcal{K}$. Now $(\mathcal{I}\setminus \mathcal{K})$ consists of all initial conflicts for $\tuple{p_1,p_2}$ that lead to an initial conflict for $\tuple{\prule_1,\prule_2}$. This means that the size of $(\mathcal{I}\setminus \mathcal{K})$ equals the size of $\mathcal{P}$. This contradicts with \autoref{cor:minimal}, since we have found a set $\mathcal{S'}^{pl} \cup \mathcal{K}$ that has smaller size than $\mathcal{I} = (\mathcal{I}\setminus \mathcal{K}) \cup \mathcal{K}$, but is still complete w.r.t. parallel dependence for $\tuple{p_1,p_2}$.

In the second case, we assume that the size of $\mathcal{S'}$ equals the size of $\mathcal{P}$. Since $\mathcal{S'}$ has size strictly smaller than the size of $\mathcal{S}$, it holds that $\mathcal{S} = \mathcal{P} \cup \{stp_{L_1 +L_2}\}$ (by definition $\mathcal{S}$ contains at least $\mathcal{P}$). This means that there exists at least one conflicting transformation pair $tp$ for $\tuple{\prule_1,\prule_2}$ such that the initial parallel independent transformation pair $tp_{L_1 +L_2}$ can be embedded into $tp$. Consequently, $tp$ is in particular not a use-delete/delete-use conflict, but an AC conflict. Since $\mathcal{S'}$ is complete w.r.t. parallel dependence for  $\tuple{\prule_1,\prule_2}$ it needs to be possible to embed a symbolic transformation pair $stp_K$ from $\mathcal{S'}$ into $tp$.  Note that the plain transformation pair $tp_K$ underlying $stp_K$ must be parallel independent, since $tp$ as plain transformation pair is parallel independent. This is because it is possible embed the plain parallel independent transformation pair $tp_{L_1 + L_2}$ into $tp$ that then by initiality can be embedded also into $tp_K$. From there it is easy to argue that then also $tp_K$ must be parallel independent. 
Now $\mathcal{S}'^{pl}$ necessarily contains this parallel independent plain pair of transformations $tp_K$. Consequently $\mathcal{S'}^{pl}$ minus this transformation pair $tp_K$ and together with $\mathcal{K}$ would still be complete for the plain rules w.r.t. parallel dependence. The size of $\mathcal{S'}^{pl} \setminus \{tp_K\} \cup \mathcal{K}$ is strictly smaller than the size of $(I \setminus \mathcal{K}) \cup \mathcal{K}$. Consequently, we have again found a contradiction with \autoref{cor:minimal}. 

In the third case, we assume that the size of $\mathcal{S'}$ is strictly larger than the size of $\mathcal{P}$. This contradicts our assumption, since the size of $\mathcal{S'}$ cannot be strictly smaller than the size of $\mathcal{S}$, which at least contains $\mathcal{P}$.  
\end{proof}

{\noindent \bf Proof of \autoref{thm:complete-unfolding}}
\begin{proof}
Because each disjunctive unfolding of a regular initial conflict consists of a finite number of elements (see finite index set \autoref{def:condition}), the set $\cup_{stp \in \mathcal{S}}\mathcal{U^D}(stp)$ is finite as soon as the set $\mathcal{S}$ of all initial conflicts is finite.  

We now show that the set $\cup_{stp \in \mathcal{S}}\mathcal{U^D}(stp)$ (consisting of concrete transformation pairs) is also $\mathcal{M}$-complete w.r.t. parallel dependence. 
From \autoref{thm:completeness-ic-acs} we know that the set of initial conflicts $S$ (consisting of symbolic transformation pairs) is complete w.r.t. parallel dependence.  This means that there exists some $stp_K:\tuple{tp_K,\ac_K,\ac^*_K}$ with AC-disregarding transformation pair $tp_K: P_1\ldder_{\prule_1,o_1}K\dder_{\prule_2,o_2} P_2$  from $S$ that can be embedded into $tp_G$ via some extension morphism $m:K\rightarrow G$ with $m \models \ac_K \wedge \ac^*_K$.

Consequently, since $m \models \ac_K \wedge \ac^*_K$ 
we know that because of having only regular initial conflicts $m \models \vee_{i \in I} \exists (a_i:K \rightarrow C_i, c_i)$. This means that $m \models  \exists (a_i:K \rightarrow C_i, c_i)$ for some $i\in I$ meaning that there exists some $q_i:C_i \rightarrow G  \in \mathcal{M}$ such that $q_i \models c_i$ and $q_i \circ a_i = m$. Because of the Restriction Theorem for plain rules~\cite{EhrigEHP06} and the fact that $q_i$ is in $\mathcal{M}$ we know that there exists a  pair of plain transformations via matches $a_i \circ o_1$ and $a_i \circ o_2$ that can be embedded into $tp_G$ via extension morphism $q_i$. 
Now we have to show that the matches $a_i \circ o_1$ and $a_i \circ o_2$ of this transformation pair $tp_{C_i}$ indeed satisfy the conditions $\ac_{L_1}$ and $\ac_{L_2}$, respectively.  Moreover, we argue that the transformation pair $tp_{C_i}$ is conflicting. To this extent, consider the identity morphism $id_{C_i}$ satisfying trivially $c_i$. Consequently, $a_i \models \exists (a_i:K \rightarrow C_i, c_i)$, and because of regularity it follows that $a_i \models \ac_K \wedge \ac^*_K$. By the Embedding Theorem~\cite{EhrigGHLO12,EhrigGHLO14} it then follows that we indeed obtain a pair of transformations with $a_i \circ o_1$ and $a_i \circ o_2$ satisfying the rule ACs  $\ac_{L_1}$ and $\ac_{L_2}$, since $a_i \models \ac_K$ making it AC-consistent for both AC-disregarding transformations in $tp_K$ indeed. Moreover, because of Lemma 6.2 (characterization of parallel dependency with ACs) in~\cite{EhrigGHLO12,EhrigGHLO14}  $tp_{C_i}$ is also parallel dependent, since $a_i \models \ac^*_{K}$.

Since pushouts and pushout complements are unique up to  isomorphism this pair of transformations $tp_{C_i}$ (built for the matches $a_i \circ o_1$ and $a_i \circ o_2$) is indeed equivalent to some transformation pair  from $\mathcal{U^D}(stp_K)$.  As a consequence we have indeed found an extension diagram embedding $tp_{C_i}: D_{1,i}\ldder_{\prule_1,a_i \circ o_1} C_i \dder_{\prule_2,a_i \circ o_2} D_{2,i}$ in $\mathcal{U^D}(stp_K)$ into $tp_G$ via $q_i$.  
\qed
\end{proof}

{\noindent \bf Proof of \autoref{thm:monotone-NACs}}
\begin{proof}
This follows directly from \autoref{def:symbolic-pair} and the constructions~\cite{EhrigGHLO14} related to \autoref{lem:left} and \autoref{lem:shift}. In particular, $\ac_K$ 
(arising from shifting each rule NAC over the match morphisms into $K$) consists of a conjunction of NACs again, and $\ac^*_{K}$ becomes true or consists of a (non-empty) disjunction of PACs. We obtain by shifting (using \autoref{lem:shift}) each NAC over each PAC morphism ($\exists id_K$ in the case $\ac^*_{K}$ becomes true) a condition that is equivalent to a disjunction of literals of the form $\exists(a_i:K\rightarrow C_i,\wedge_{q \in Q} \neg \exists n_q)$.  
\qed
\end{proof}

Let us recall the definition\footnote{We assume that the class $\mathcal{Q}=\mathcal{M}$, since we for simplicity do not distinguish between morphisms used to satisfy (or violate) a graph condition ($\mathcal{Q}$-morphisms) and $\mathcal{M}$-morphisms (as analogously assumed in the previous seminal work w.r.t. rules with ACs~\cite{EhrigGHLO12,EhrigGHLO14}.} of critical pairs for rules with NACs~\cite{LambersEO06}, before showing that initial conflicts for rules with ACs as defined in this paper represent a conservative extension  in the sense of \autoref{thm:conservative-NACs}. 

\begin{definition}[critical pair]\label{cat:def:cp}\index{critical pair}
A \emph{critical pair} is a pair of direct transformations $K \stackrel{p_1,m_1}{\Rightarrow} P_1$ with $NAC_{p_1}$ and $K \stackrel{p_2,m_2}{\Rightarrow} P_2$
with $NAC_{p_2}$ such that:
\begin{enumerate}
\item
\begin{enumerate}
\item $\nexists h_{12}:L_1 \rightarrow D_2:d_2 \circ h_{12} = m_1$ and $(m_1,m_2)$ in $ \mathcal{E}'$
 \newline (use-delete conflict) \newline or
\item there exists $h_{12}:L_1 \rightarrow D_2$ s.t. $d_2 \circ h_{12} = m_1$, but for one of the NACs $n_1:L_1 \rightarrow N_1$ of $p_1$
there exists a morphism $q_{12}: N_1 \rightarrow P_2 \in  \mathcal{M}$ s.t. $q_{12} \circ n_1 = e_2 \circ h_{12}$, and thus, $e_2\circ h_{12} \not \models NAC_{n_1}$, and $(q_{12},m'_2)$ in $ \mathcal{E}'$ (forbid-produce conflict)
\end{enumerate}
or
\item
\begin{enumerate}
\item $\nexists h_{21}:L_2 \rightarrow D_1:d_1 \circ h_{21} = m_2$ and $(m_1,m_2)$ in $ \mathcal{E}'$
\newline (delete-use conflict) \newline or
\item there exists $h_{21}:L_2 \rightarrow D_1$ s.t. $d_1 \circ h_{21} = m_2$, but for one of the NACs $n_2:L_2 \rightarrow N_2$ of $p_2$
there exists a morphism $q_{21}: N_2 \rightarrow P_1 \in  \mathcal{M}$
s.t. $q_{21} \circ n_2 = e_1 \circ h_{21}$, and thus, $e_1\circ h_{21} \not \models NAC_{n_2}$, and $(q_{21},m'_1)$ in $ \mathcal{E}'$ (produce-forbid conflict)
\end{enumerate}
\end{enumerate}
\[
\xymatrix{
 & & N_1 \ar@{.>}@/^2.3pc/[ddrrrr]^{q_{12}} & & N_2 \ar@{.>}@/_2.3pc/[ddllll]_{q_{21}} & & \\
R_1 \ar[d]_{m'_1} & K_1 \ar[r]^{l_1} \ar[l]_{r_1} \ar[d]& L_1  \ar@{.>}@/^/[drrr]^(.7){h_{12}} \ar[u]_{n_1} \ar[dr]_{m_1}& & L_2 \ar@{.>}@/_/[dlll]_(.7){h_{21}} \ar[u]_{n_2} \ar[dl]^{m_2}& K_2 \ar[d]\ar[l]_{l_2} \ar[r]^{r_2} & R_2 \ar[d]^{m'_2} \\
P_1 & D_1 \ar[rr]_{d_1} \ar[l]^{e_1}& &K & & D_2 \ar[ll]^{d_2} \ar[r]_{e_2} & P_2 \\
}
\]
\end{definition}

{\noindent \bf Proof of \autoref{thm:conservative-NACs}}
\begin{proof}
Recall that an initial conflict $stp_K:\tuple{tp_K,\ac_K,\ac^*_K}$ consists in particular of an initial conflict for plain rules or of an initial parallel independent pair of transformations for plain rules.  Having rules with NACs only, we can unfold such an initial conflict into a set of conflicting transformations $tp_{C_i}$ with each conflict stemming from one literal in the finite disjunction $\vee_{i\in I} \exists(a_i:K\rightarrow C_i,c_i)$ with $c_i$ a condition of the form $\wedge_{q \in Q} \neg \exists n_q$.  
When extending the initial parallel independent pair via some $a_i: L_1 + L_2 \rightarrow C_i$, the corresponding transformation pair remains plain parallel independent such that we in particular obtain a critical pair satisfying (1.b) or (2.b) according to \autoref{cat:def:cp}.  
Moreover we know that $(q_{12},m'_2)$ and $(q_{21},m'_1)$ belong to $\mathcal{E'}$ by construction and we know that $tp_{C_i}$ is AC-conflicting indeed.   
In case that we extend an initial conflict for the plain rules to a real conflict for the rules with NACs, we obtain a critical pair either satisfying (1.a) or  (2.a) according to \autoref{cat:def:cp} in case no additional context is added by the positive application condition $a_i$ stemming from the disjunction $\vee_{i\in I} \exists(a_i:K\rightarrow C_i,c_i)$ in the unfolding, or satisfying (1.b) or (2.b) as in the previous case.
\qed
\end{proof}


\end{document}